
\documentclass[letterpaper, 10 pt, conference]{ieeeconf}  %

\IEEEoverridecommandlockouts                              %

\overrideIEEEmargins                                      %

\usepackage{hyperref}%
\hypersetup{colorlinks=true, linkcolor=blue, breaklinks=true, urlcolor=blue}
\usepackage{doi}

\usepackage[all]{xy}
\usepackage{amsmath}
\usepackage{amscd}
\usepackage{mathrsfs}
\usepackage{amssymb}
\usepackage{tikz}
\usepackage[utf8]{inputenc}
\usepackage[yyyymmdd]{datetime}
\usepackage{graphicx}
\usepackage{subcaption}
\usepackage{lipsum}
\usepackage{multicol} \usepackage{mathtools}
\usepackage{multirow}
\usepackage{threeparttable}
\usepackage{pdfpages}
\usepackage{stmaryrd}
\usepackage{selectp}

\usepackage{multicol} \usepackage{mathtools}
\usepackage{version,xspace}
\usepackage{url,doi}

 \newcommand{\setdef}[2]{\{#1
	\; | \; #2\}}

\newcommand\oprocendsymbol{\hbox{$\triangle$}}
\newcommand\oprocend{\relax\ifmmode\else\unskip\hfill\fi\oprocendsymbol}

 \sloppy

\usepackage{enumitem}

\DeclareSymbolFont{bbold}{U}{bbold}{m}{n}
\DeclareSymbolFontAlphabet{\mathbbold}{bbold}
\newcommand{\vect}[1]{\mathbbold{#1}}

\newcommand{\real}{\mathbb{R}}

\newcommand{\seminorm}[1]{{\left\vert\kern-0.25ex\left\vert\kern-0.25ex\left\vert #1
		\right\vert\kern-0.25ex\right\vert\kern-0.25ex\right\vert}}

\newcommand{\semimeasure}[1]{\mu_{\seminorm{\cdot}}\kern-0.5ex\left(#1\right)}

\newtheorem{theorem}{Theorem}[section]
\newtheorem{proposition}[theorem]{Proposition}

\newtheorem{definition}[theorem]{Definition}
	\newtheorem{remark}[theorem]{Remark}
	\newtheorem{example}[theorem]{Example}
	
        \newtheorem{assumption}[theorem]{Assumption}

\newcommand{\suchthat}{\;\ifnum\currentgrouptype=16 \middle\fi|\;}
\newcommand{\scirc}{\raise1pt\hbox{$\,\scriptstyle\circ\,$}}

\newcommand{\OF}{\mathsf{F}}
\newcommand{\OG}{\mathsf{G}}

\newcommand{\OT}{\mathsf{T}}

\setcounter{tocdepth}{4}
\setcounter{secnumdepth}{4}
\sloppy

\title{\LARGE \bf
A Contracting Dynamical System Perspective toward Interval Markov Decision Processes
}

\author{Saber Jafarpour and Samuel Coogan%
\thanks{*This work is partially supported by the National Science Foundation under awards \#1749357, \#1931980, and \#1924978,  the Air Force Office of Scientific Research under Grant FA9550-23-1-0303, and   NASA under the University Leadership Initiative (ULI)  Award \#80NSSC20M0163 but solely reflects the opinions and conclusions of its authors.}%
\thanks{Saber Jafarpour and Samuel Coogan are with the School of Electrical and Computer Engineering, Georgia Institute of Technology, USA, {\tt\small \{saber,sam.coogan\}@gatech.edu}}%
}

\begin{document}

\maketitle
\thispagestyle{empty}
\pagestyle{empty}

\begin{abstract}
Interval Markov decision processes are a class of Markov models where the transition probabilities between the states belong to intervals.
In this paper, we study the problem of efficient estimation of the optimal policies in Interval Markov Decision Processes (IMDPs) with continuous action-space.   
Given an IMDP, we show that the pessimistic (resp. the optimistic) value iterations, i.e., the value iterations under the assumption of a competitive adversary (resp. cooperative agent), are monotone dynamical systems and are contracting with respect to the $\ell_{\infty}$-norm. 
Inspired by this dynamical system viewpoint, we introduce another IMDP, called the action-space relaxation IMDP.
We show that the action-space relaxation IMDP has two key features: (i) its optimal value is an upper bound for the optimal value of the original IMDP, and (ii) its value iterations can be efficiently solved using tools and techniques from convex optimization.  
We then consider the policy optimization problems at each step of the value iterations as a feedback controller of the value function. Using this system-theoretic perspective, we propose an iteration-distributed implementation of the value iterations for approximating the optimal value of the action-space relaxation IMDP. 

\end{abstract}

\section{Introduction}

\paragraph*{Motivation and Problem Statement}

Markov decision process (MDP) is a powerful and classical framework for modeling the stochastic interactions between a system and its %
environment~\cite{MLP:14}. The MDP framework has been successfully used to study various problems in dynamic decision-making~\cite{DB-JNT:96} and reinforcement learning~\cite{RSS-AGB:18}. 
A fundamental assumption in the MDP framework is that the parameters of the model are known or are learnable. However, in many real-world applications, the model parameters are typically estimated or inferred using data-driven methods and, thus, they are far from accurate. 
In the literature, several different approaches have been proposed to analyze MDPs with parameter uncertainties. %
In~\cite{GNI:05,AN-LEG:05}, robust dynamic programming is proposed to study optimal solutions of Markov decision processes with uncertainty in transition probabilities. In~\cite{SHQL-AA-PLG-BA:21}, a set-valued fixed-point equation is proposed to study the optimal value of Markov decision processes with uncertain reward functions. In~\cite{DT-GA-SC:14}, computationally efficient algorithms are developed to infer the unknown parameters in MDPs.

Interval Markov decision processes (IMDPs) are a class of Markov models with interval-bounded transition probabilities and reward functions~\cite{RG-SL-TD:00}. IMDPs can be considered as a family of MDPS and they appear naturally in the setting where systems are modeled using MDPs with uncertain parameters or with parameters obtained from data-driven sampling approaches. An alternative interpretation for the IMDP framework comes from a game-theoretic perspective. In this case, an IMDP models how an MDP interacts with the environment in the presence of an agent who resolves uncertain transition probabilities~\cite{GD-ML-MM-LL:22}.
In the literature, IMDPs have been used to analyze a various tasks including checking temporal logic specifications~\cite{EMW-UT-RMM:12} and motion planning in robotics~\cite{JJ-YZ-SC:22}.
Much of the early works on the IMDPs focus on models with finite number of actions~\cite{RG-SL-TD:00,GNI:05,AN-LEG:05}. Recently, IMDPs with continuous-action spaces have gained attention due to their role in finite-state abstraction of stochastic dynamical systems~\cite{ML-SBA-CB:12,SA-ML-LL:22} and in reachability analysis of stochastic systems~\cite{MD-JH-SC:21}.
Value iterations for IMDPs with continuous action-spaces are studied in~\cite{GD-ML-MM-LL:22} and in~\cite{SH-BM:18}.

One of the main challenges in studying IMDPs with continuous action-space arises in computing their optimal policies. 
It turns out that most of the existing iterative algorithms for estimating optimal policy of MDPs and IMDPs (including value iteration, policy-iteration, and their interval-valued counterparts) require solving an optimization problem in the action variables at each iteration step.
Unlike finite-state finite-action IMDPs where the optimization over the action-space can be implemented efficiently, for IMDPs with continuous action-spaces, optimization over action variables can lead to two important challenges.
First, in the absence of any structure for the optimization problem (\emph{e.g.} convexity/concavity of the cost function), it is generally necessary to resort to heuristic algorithms to approximate the solutions of these optimization problems. These heuristic methods can significantly degrade the quality of the estimated optimal policies and can ruin any guarantee on the optimality of the solutions. 
Secondly, in large-scale IMDPs, solving these optimization problems at each iteration step is computationally complicated, potentially leading to intractability of finding the optimal values.
Most of the existing literature on IMDPs focuses on %
discretizing the action-space and then using known results about discrete-action IMDPs. However, this approximation is sub-optimal and scales poorly with the dimension of the action-space~\cite{GD-ML-MM-LL:22}. The only exception is~\cite{SA-ML-LL:22} which provides a computationally efficient reformulation of interval value iterations.

\paragraph*{Contributions}

In this paper, we study IMDPs with continuous action-spaces from a dynamical system perspective. 
In particular, we use monotone system theory and contraction theory to study convergence of their value iterations for both pessimistic and optimistic policies, that is, policies under the assumption of a adversarial agent and a cooperative agent, respectively. 
By considering the value iterations in IMDPs as dynamical systems, we study contractivity and monotonicity of the pessimistic and optimistic value iterations with respect to the $\ell_{\infty}$-norm and the standard partial order.
Next, given an IMDP, we introduce another IMDP, called the action-space relaxation IMDP, obtained by bounding its probability transition and rewards using suitable convex/concave functions. 
As our first main result, we use a dynamical systems perspective to show that the optimal value of action-space relaxation IMDP provides bound on the optimal value of the original MDP. 
As our second main result, given an action-space relaxation IMDP, we propose to reduce the computational burden of the interval value iterations by implementing the policy optimization problem in an iteration-distributed fashion. We consider the value iteration and the policy optimization problem as an interconnected dynamical system and leverage the contractivity of the value iterations to provide guarantees for convergence of the interconnected system to the optimal value of the action-space relaxation IMDP. 

\section{Notations and Mathematical Preliminary}

For every $p\in [1,\infty]$, we denote the $\ell_p$-norm on $\real^n$ by $\|\cdot\|_{p}$. Given two sets $X$ and $Y$, the set of all the maps from $X$ to $Y$ is denoted by $Y^{X}$. For any compact set $X\subseteq \real^n$, we define $\|X\|_{\infty} = \setdef{\|x-y\|_{\infty}}{x,y\in X}$. Let $A,B\in \real^{n\times n}$, then we write $A\succeq B$ if $A-B$ is a positive semi-definite matrix. Let $S$ be a finite set with $n$ elements and let $v\in \real^S$. We define $\{1_v,\ldots, n_v\}$ as an ordered permutation of elements of the set $S$ such that $v(1_v)\ge v(2_v)\ge \ldots \ge v(n_v)$. The set of all compact interval subsets of $[a,b]$ is denoted by $\mathrm{Interval}_{[a,b]}$ and the set of all compact interval subsets of $\real$ is denoted by $\mathrm{Interval}_{\real}$, i.e., we have 
\begin{align*}
    \mathrm{Interval}_{[a,b]} &= \setdef{[x,y]}{a\le x\le y\le b}\\
    \mathrm{Interval}_{\real} &= \setdef{[x,y]}{x\le y}.
\end{align*}
Given an operator $f:\real^n\to \real^n$, we say that $f$ is \emph{monotone} if, for every $x\le y$, we have $f(x)\le f(y)$. Given an operator $f:\real^n\to \real^n$, we say that $f$ is \emph{monotone} if, for every $x\le y$, we have $f(x)\le f(y)$. Given a norm $\|\cdot\|$ on $\real^n$, we say that $f$ is \emph{contracting} with rate $\lambda\in (0,1)$ with respect to the norm $\|\cdot\|$, if
\begin{align*}
    \|f(x)-f(y)\| \le \lambda\|x-y\|,\quad \mbox{ for every } x,y\in \real^n.
\end{align*}
Given a compact set $\mathcal{X}\subseteq \real^n$, the orthogonal projection into $\mathcal{X}$ is denoted by $\mathrm{Proj}_{\mathcal{X}}$, i.e., $\mathrm{Proj}_{\mathcal{X}}(y) = \mathrm{argmin}_{x\in \mathcal{X}}\|x-y\|_2$, for every $y\in \real^n$.  
We also recall the setting of a discounted infinite-horizon Markov Decision Process (MDP) with continuous action-space. An MDP with continuous action-space is a tuple $\mathcal{M} = (S,A,P,R,\gamma)$ where
\begin{enumerate}
\item $S$ is a finite set of states.
\item $A\subseteq\real^m$ is a compact action space.
\item $P: S\times S\times A\to [0,1]$ is the transition
  probability function, i.e., for every $s\in S$ and every $a\in A$,
  $P(s',s,a)$ is the probability of arriving at state $s'$ by taking
  action $a$ in the state $s$. We assume $0\le
  P(s',s,a)\le 1$ for every $s,s'\in S$ and every $a\in A$ and we
  have $\sum_{s'\in S}  P(s',s,a) =1$. 
  \item $R: S\times A\to \real$ is the reward function where
    $R(s,a)$ is the cost of taking action $a$ at state $s$.
 \item  $\gamma\in (0,1)$ is a discount factor.  
 \end{enumerate}

A \emph{policy} for the MDP $\mathcal{M}$ is a vector $\pi\in A^{S}$ which assigns an action $a$ to each state $s$\footnote{A policy defined this way is usually referred to as a Markovian deterministic stationary policy in the literature~\cite{MLP:14}.}.
For every policy $\pi\in A^{S}$, we  define the value function $V^{\mathcal{M}}_{\pi}:S\to \real$ as %
\begin{align}\label{eq:valuefunction}
  V^{\mathcal{M}}_\pi(s) = \mathbb{E}\left(\sum_{t=0}^{\infty} \gamma^t R(s_t,\pi(s_t))\Big| s_0=s\right)
\end{align}
where $\{s_t\}_{t=0}^{\infty}$ is a time sequence of states starting from $s_0=s$ and following the policy $\pi$. The goal is to find a policy $\pi^*\in A^{S}$ which maximizes the value function $V^{\mathcal{M}}_{\pi}$, i.e., a policy $\pi^*\in A^{S}$ such that
\begin{align}\label{eq:optimal}
    \pi^* = \mathrm{argmax}_{\pi\in A^{S}}V^{\mathcal{M}}_{\pi}.
\end{align}
In general, it can be shown that the optimization problem~\eqref{eq:optimal} has a unique optimal value $V^*$, which is obtained at a policy $\pi^*\in A^{S}$~\cite[Theorem 6.1.1]{MLP:14}, i.e., $V^{\mathcal{M}}_{\pi^*}=V^*$. It can be shown that the optimal value $V^*$ satisfies
\begin{align*}
  V^*(s) = R(s,\pi^*(s)) + \gamma \sum_{s'\in S} P(s',s,\pi^*(s))V^*(s').
  \end{align*}
The \emph{Bellman-policy operator} $\OF:\real^{S}\times A^{S} \to \real^{S}$ is 
\begin{align}\label{eq:bellman-FP}
  \OF(v,\pi)(s) :=  R(s,\pi(s)) + \gamma \sum_{s'\in S}
  P(s',s,\pi(s))v(s')
  \end{align}
  and the \emph{Bellman operator} $\OG:\real^{S}\to \real^{S}$ is defined by 
  \begin{align}\label{eq:bellman}
      \OG(v)(s) &:= \max_{a\in A} \left\{R(s,a) + \gamma \sum_{s'\in S}
  P(s',s,a)v(s')\right\}.
  \end{align}
  Equivalently, using the vector notation, we have $\OG(v) = \max_{\pi\in A^{S}} \OF(v,\pi)$, for every $v\in \real^{S}$. It is known that the Bellman operator $\OG$ is contracting with respect to the $\ell_{\infty}$-norm, monotone with respect to the standard partial ordering, and the optimal value $V^*$ is the fixed point of the Bellman operator, i.e., $V^{*}=\OG(V^*)$~\cite[Theorem 6.2.3]{MLP:14}.

  \section{Interval Markov Decision Process}

  In this section, we introduce \emph{Interval Markov Decision Processes (IMDPs)} as a class of Markov models where the cost functions and probability transitions are unknown and belong to  suitable intervals. An IMDP is a tuple $\mathcal{IM} = (S,A,[P],[R],\gamma)$ where
\begin{enumerate}
\item $S$ is a finite set of states.
\item $A\subseteq \real^m$ is a compact action space.
\item $[P]\in S\times S\times A\to \mathrm{Interval}_{[0,1]}$ denotes the transition
  probability intervals, i.e., for every $s\in S$ and every $a\in A$,
  $[P](s',s,a) = [\underline{P}(s',s,a),\overline{P}(s',s,a)]$ is the probability interval of arriving to the state $s'$ by taking
  action $a$ in the state $s$. For the sake of consistency,
  we assume that $\sum_{s'\in S}  \underline{P}(s',s,a) \le 1\le \sum_{s'\in S}  \overline{P}(s',s,a)$.
  \item $[R]:S\times A\to \mathrm{Interval}_{\real}$ denotes the reward function where
    $[R](s,a) = [\underline{R}(s,a), \overline{R}(s,a)]$ is the reward interval of taking action $a$ at state $s$.
 \item  $\gamma\in (0,1)$ is a discount factor.  
 \end{enumerate}
 
 An MDP $\mathcal{M} = (S,A,P,R,\gamma)$ belongs to the IMDP $\mathcal{IM}=(S,A,[P],[R],\gamma)$, and we write $\mathcal{M}\in \mathcal{IM}$, if 
 \begin{align*}
     \underline{P}(s',s,a) &\le P(s',s,a)\le \overline{P}(s',s,a), \\
     \underline{R}(s,a) &\le R(s,a)\le \overline{R}(s,a),
 \end{align*}
 for every $s,s'\in S$ and every $a\in A$. A policy for $\mathcal{IM}$ is a vector $\pi\in A^{S}$ that assigns an action $a$ to each state $s$. %

   \begin{remark}\textit{(Comparison with the literature)} Our definition of IMDPs generalizes the classical definitions in~\cite{RG-SL-TD:00,AN-LEG:05} which assume finite action-spaces. This generalization is motivated by, \emph{e.g.,} applications in robotics~\cite{JJ-YZ-SC:22} and abstraction of stochastic dynamical systems~\cite{MD-JH-SC:21,GD-ML-MM-LL:22}. 
     Moreover, two different interpretations for IMDPs have been proposed in the literature. The first interpretation considers an IMDP as an MDP with uncertain parameters~\cite{RG-SL-TD:00,AN-LEG:05}, whereas the second interpretation considers an IMDP as an MDP interacting with an agent who resolves uncertain probabilities~\cite{SHQL-AA-PLG-BA:21,GD-ML-MM-LL:22}. 
 \end{remark}
\smallskip

Given an IMDP $\mathcal{IM}$ and a policy $\pi\in A^{S}$, we study the possible ranges of the value function~\eqref{eq:valuefunction} for every $\mathcal{M}\in \mathcal{IM}$. First, for every $(s,a)\in S\times A$, we  define $\Delta^{\mathcal{IM}}_{s,a}$ as the set of all $p:S\to [0,1]$ such that,
 \begin{align} \label{eq:deltasa}
    \underline{P}(s',s,a) &\le p(s') \le \overline{P}(s',s,a), \mbox{ for all }s'\in S\nonumber\\
    &\sum\nolimits_{r\in S} p(r) = 1. 
 \end{align}
Using the set $\Delta^{\mathcal{IM}}_{s,a}$, we define the \textit{interval Bellman-policy operator} for $\mathcal{IM}$ as the map $\left[\begin{smallmatrix} \underline{\OF} \\ \overline{\OF} \end{smallmatrix}\right]: \real^{S}\times A^{S}\to \real^{S}\times \real^S$:
 \begin{align}\label{eq:IB}
     \underline{\OF}(v,\pi)(s) &= \underline{R}(s,\pi(s)) + \gamma\min_{p\in \Delta^{\mathcal{IM}}_{s,\pi(s)}} \sum_{s'\in S}p(s')v(s'),\nonumber\\
     \overline{\OF}(v,\pi)(s) &= \overline{R}(s,\pi(s)) + \gamma\max_{p\in \Delta^{\mathcal{IM}}_{s,\pi(s)}} \sum_{s'\in S}p(s')v(s'),
 \end{align}
 and the \textit{interval Bellman operator} for $\mathcal{IM}$ as the map $ \left[\begin{smallmatrix} \underline{\OG} \\ \overline{\OG} \end{smallmatrix}\right]: \real^{S}\to \real^{S}\times \real^S$:
 \begin{align}\label{eq:IB-2}
     \underline{\OG}(v)(s) &= \max_{a\in A}\left\{\underline{R}(s,a) + \gamma\min_{p\in \Delta^{\mathcal{IM}}_{s,a}} \sum_{s'\in S}p(s')v(s')\right\},\nonumber\\
     \overline{\OG}(v)(s) &= \max_{a\in A}\left\{\overline{R}(s,a) + \gamma\max_{p\in \Delta^{\mathcal{IM}}_{s,a}} \sum_{s'\in S}p(s')v(s')\right\}.
 \end{align}
 Equivalently, using the vector notation,  we have $\underline{\OG}(v) = \max_{\pi\in A^{S}}\underline{\OF}(v,\pi)$ and $\overline{\OG}(v) =\max_{\pi\in A^{S}}\overline{\OF}(v,\pi)$. 
 for every $v\in \real^{S}$. In the next theorem, we show that the interval Bellman (resp. Bellman-policy) operator is monotone and contracting with respect to the $\ell_{\infty}$-norm and can be used to provide upper and lower bounds on the Bellman (resp. Bellman-policy) operator of every MDP that belongs to $\mathcal{IM}$. 
 
  \begin{theorem}[Interval Bellman operator]\label{thm:IB}
     Consider an IMDP $\mathcal{IM} = (S,A,[P],[R],\gamma)$ with the interval Bellman-policy operator and the interval Bellman operator in~\eqref{eq:IB} and~\eqref{eq:IB-2}, respectively. Let $\mathcal{M}$ be an MDP such that $\mathcal{M}\in \mathcal{IM}$ with the Bellman-policy operator and the Bellman operator operator in~\eqref{eq:bellman-FP} and~\eqref{eq:bellman}, respectively. Then,
     \begin{enumerate}
         \item\label{p1:IB} for every $\pi\in A^{S}$, the operators $v\mapsto \underline{\OF}(v,\pi)$ and $v\mapsto \overline{\OF}(v,\pi)$ are monotone and contracting with respect to the $\ell_{\infty}$-norm with rate $\gamma$ and
         \begin{align}\label{eq:1}
         \underline{\OF}(v,\pi) \le \OF(v,\pi) \le \overline{\OF}(v,\pi),\quad\mbox{for all }v\in \real^S.
     \end{align}
         \item\label{p2:IB} the operators $\underline{\OG}$ and $\overline{\OG}$ are monotone and contracting with respect to the $\ell_{\infty}$-norm with rate $\gamma$ and 
         \begin{align}\label{eq:2}
         \underline{\OG}(v) \le \OG(v) \le \overline{\OG}(v),\quad\mbox{for all }v\in \real^S.
     \end{align}
     \end{enumerate}
   \end{theorem}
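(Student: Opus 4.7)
The plan is to establish part (i) directly from the definitions and then bootstrap part (ii) using the identities $\underline{\OG}(v) = \max_{\pi\in A^{S}}\underline{\OF}(v,\pi)$ and $\overline{\OG}(v) = \max_{\pi\in A^{S}}\overline{\OF}(v,\pi)$, together with the fact that monotonicity and $\gamma$-Lipschitz continuity in $\ell_\infty$ are both preserved under pointwise suprema.

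For part (i), I would fix a policy $\pi\in A^S$ and argue componentwise in $s\in S$. \textbf{Monotonicity} is immediate: if $v_1\le v_2$, then for every $p\in \Delta^{\mathcal{IM}}_{s,\pi(s)}$, nonnegativity of $p$ yields $\sum_{s'}p(s')v_1(s')\le \sum_{s'}p(s')v_2(s')$, and both $\min_p$ and $\max_p$ preserve this inequality. \textbf{Contractivity} of $\underline{\OF}(\cdot,\pi)$ with rate $\gamma$ follows from a standard suboptimality argument: letting $p^*$ attain the minimum defining $\underline{\OF}(v_2,\pi)(s)$, we have
\begin{align*}
\underline{\OF}(v_1,\pi)(s) - \underline{\OF}(v_2,\pi)(s) \le \gamma \sum_{s'} p^*(s')\bigl(v_1(s')-v_2(s')\bigr) \le \gamma\,\|v_1-v_2\|_\infty,
\end{align*}
where the last step uses $\sum_{s'}p^*(s')=1$ together with $p^*\ge 0$. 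Swapping the roles of $v_1$ and $v_2$ gives the matching lower bound, and the proof for $\overline{\OF}$ is identical after replacing the minimizer with a maximizer. \textbf{The sandwich bound}~\eqref{eq:1} follows because, for every $\mathcal{M}\in\mathcal{IM}$, the one-step distribution $P(\cdot,s,\pi(s))$ itself lies in $\Delta^{\mathcal{IM}}_{s,\pi(s)}$; hence $\sum_{s'}P(s',s,\pi(s))v(s')$ is trapped between the min and the max in~\eqref{eq:IB}, and combining with $\underline{R}(s,\pi(s))\le R(s,\pi(s))\le \overline{R}(s,\pi(s))$ completes the inequality.

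For part (ii), $\underline{\OG}$ and $\overline{\OG}$ are pointwise suprema over $\pi\in A^S$ of the maps handled in part (i). Pointwise suprema of monotone maps are monotone, and the elementary inequality $|\sup_\pi f(\pi) - \sup_\pi g(\pi)|\le \sup_\pi |f(\pi)-g(\pi)|$ lifts the $\gamma$-contraction bound for $\underline{\OF}(\cdot,\pi)$ and $\overline{\OF}(\cdot,\pi)$, uniformly in $\pi$, to $\underline{\OG}$ and $\overline{\OG}$. The sandwich~\eqref{eq:2} is then obtained by taking the pointwise maximum over $\pi$ on both sides of~\eqref{eq:1}, since $\OG(v) = \max_\pi \OF(v,\pi)$.

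The only subtle point I anticipate is ensuring that the inner minima/maxima over $p$ are attained so that the selector $p^*$ used in the contractivity step exists; this is straightforward because $\Delta^{\mathcal{IM}}_{s,a}$ is a nonempty compact polytope (nonemptiness follows from the consistency assumption $\sum_{s'}\underline{P}(s',s,a)\le 1\le \sum_{s'}\overline{P}(s',s,a)$) and the objective is linear in $p$. Attainment of the outer max over $\pi\in A^S$ is implicit in the paper's continuous-action setting, and in any case is not required for the contraction/monotonicity conclusions, which go through verbatim with suprema in place of maxima.
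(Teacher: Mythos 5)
Your proposal is correct and follows essentially the same route as the paper: monotonicity from nonnegativity of $p$, contraction via a suboptimal-selector comparison using an attained minimizer $p^*$ with $\sum_{s'}p^*(s')=1$, the sandwich bound from $P(\cdot,s,\pi(s))\in\Delta^{\mathcal{IM}}_{s,\pi(s)}$, and part (ii) by lifting through the max over $\pi$. Your use of the inequality $|\sup_\pi f(\pi)-\sup_\pi g(\pi)|\le\sup_\pi|f(\pi)-g(\pi)|$ in part (ii) is a cleaner packaging of the paper's explicit-maximizer argument, but not a different method.
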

   \smallskip
\begin{proof}
    Regarding part~\ref{p1:IB}, consider $v,w\in \real^S$ such that
    $v\le w$. As a result, we have $\sum_{s'\in S}p(s')v(s')\le \sum_{s'\in S}p(s')w(s')$, for every $p\in \Delta^{\mathcal{IM}}_{s,a}$ and every $s,a\in S\times A$. Therefore, for every $s\in S$  and every $\pi\in A^S$,
\begin{multline*}
    \underline{\OF}(v,\pi)(s) = \underline{R}(s,\pi(s)) + \min_{p\in \Delta^{\mathcal{IM}}_{s,\pi(s)}}\sum_{s'\in S}p(s')v(s') \\ \le  \underline{R}(s,\pi(s)) + \min_{p\in \Delta^{\mathcal{IM}}_{s,\pi(s)}}\sum_{s'\in S}p(s')w(s')= \underline{\OF}(w,\pi)(s)
\end{multline*}
This implies that the operator $v\mapsto \underline{\OF}(v,a)$ is monotone.  Moreover, for every $\pi\in A^{S}$, 
\begin{multline*}
    \|\underline{\OF}(v,\pi)-\underline{\OF}(w,\pi)\|_{\infty} \\ = \gamma \max_{s\in S}\left|\min_{p\in \Delta^{\mathcal{IM}}_{s,\pi(s)}}\sum_{s'\in S}p(s')v(s')- \min_{p\in \Delta^{\mathcal{IM}}_{s,\pi(s)}}\sum_{s'\in S}p(s')w(s')\right| 
\end{multline*}
Let $s\in S$ and note that
\begin{align*}
  \min_{p\in \Delta^{\mathcal{IM}}_{s,\pi(s)}}\sum_{s'\in S}p(s')v(s')\le \min_{p\in \Delta^{\mathcal{IM}}_{s,\pi(s)}}\sum_{s'\in S}p(s')w(s')
\end{align*}
 Since $\Delta^{\mathcal{IM}}_{s,\pi(s)}$ is compact, there exists $p^*\in \Delta^{\mathcal{IM}}_{s,\pi(s)}$ such that $\min_{p\in \Delta^{\mathcal{IM}}_{s,\pi(s)}}\sum_{s'\in S}p(s')v(s')=\sum_{s'\in S}p^*(s')v(s')$. This implies that, for every $s\in S$, 
\begin{multline*}
  \left|\min_{p\in \Delta^{\mathcal{IM}}_{s,\pi(s)}}\sum_{s'\in
      S}p(s')v(s') - \min_{p\in
      \Delta^{\mathcal{IM}}_{s,\pi(s)}}\sum_{s'\in S}p(s')w(s')\right|
  \\  =   \min_{p\in \Delta^{\mathcal{IM}}_{s,\pi(s)}}\sum_{s'\in
    S}p(s')w(s') - \min_{p\in
    \Delta^{\mathcal{IM}}_{s,\pi(s)}}\sum_{s'\in S}p(s')v(s')\\ \quad
  = \min_{p\in \Delta^{\mathcal{IM}}_{s,\pi(s)}}\sum_{s'\in
    S}p(s')w(s') - \sum_{s'\in S}p^*(s')v(s')  \\ \le  \sum_{s'\in S}p(s')(v(s')-w(s')) \le \|v-w\|_{\infty}.
\end{multline*}
As a result, we get $\|\underline{\OF}(v,\pi)-\underline{\OF}(v,\pi)\|_{\infty}\le \gamma \|v-w\|_{\infty}$. This means that $v\mapsto \underline{\OF}(v,\pi)$ is contracting with respect to the $\ell_{\infty}$-norm with the rate $\gamma$. Similarly, one can show that $v\mapsto \overline{\OF}(v,\pi)$ is monotone and contracting with respect to the $\ell_{\infty}$-norm with the rate $\gamma$. To show the inequalities in~\eqref{eq:1}, note that, for every $a\in A$ and every $s,s'\in S$, we have $\underline{P}(s',s,a) \le P(s',s,a)\le \overline{P}(s',s,a)$. Using the definition of $\Delta^{\mathcal{IM}}_{s,a}$ in equation~\eqref{eq:deltasa}, we get that
 \begin{multline*}
     \min_{p\in \Delta^{\mathcal{IM}}_{s,a}} \sum_{s'\in
       S}p(s')v(s')\le \sum_{s'\in S}P(s',s,a)v(s') \\ \le \max_{p\in \Delta^{\mathcal{IM}}_{s,a}} \sum_{s'\in S}p(s')v(s'),
 \end{multline*}
 for every $s\in S$ and every $a\in A$. Thus, for every $v\in \real^S$ and $\pi\in A^S$, we have 
 \begin{align*}
  \underline{\OF}&(v,\pi)(s)=\underline{R}(s,\pi(s)) + \gamma \min_{p\in \Delta^{\mathcal{IM}}_{s,\pi(s)}} \sum_{s'\in S}p(s')v(s')\\ & \le  R(s,\pi(s)) + \gamma     \sum_{s'\in S}P(s',s,\pi(s))v(s') = \OF(v,\pi)(s) \\ & \le \overline{R}(s,\pi(s)) + \gamma \max_{p\in \Delta^{\mathcal{IM}}_{s,\pi(s)}} \sum_{s'\in S}p(s')v(s') = \overline{\OF}(v,\pi)(s). 
 \end{align*}

Regarding part~\ref{p2:IB}, consider $v,w \in \real^S$ such that $v\le w$. Then, using part~\ref{p1:IB}, we have $\underline{\OF}(v,\pi)\le \underline{\OF}(w,\pi)$, for every $\pi\in A^S$. This implies that 
\begin{align*}
    \underline{\OG}(v) = \max_{\pi\in A^S} \underline{\OF}(v,\pi)\le \max_{\pi\in A^S}\underline{\OF}(w,\pi) = \underline{\OG}(w). 
\end{align*}
This means that $\underline{\OG}$ is monotone. On the other hand, let $\pi^*\in A^S$ be such that $\OF(v,\pi^*) = \max_{\pi\in A^S}\underline{\OF}(v,\pi)$. Therefore, we have
\begin{align*}
    \|\underline{\OG}(v)-\underline{\OG}(w)\|_{\infty} &= \|\max_{\pi\in A^S}\underline{\OF}(v,\pi)-\max_{\pi\in A^S}\underline{\OF}(w,\pi)\|_{\infty} \\ & = \max_{\pi\in A^S}\underline{\OF}(w,\pi) - \max_{\pi\in A^S}\underline{\OF}(v,\pi) \\ & 
    =\max_{\pi\in A^S}\underline{\OF}(w,\pi) -\underline{\OF}(v,\pi^*) \\ & \le \underline{\OF}(w,\pi^*)-\underline{\OF}(v,\pi^*) \le \gamma\|v-w\|_{\infty},
\end{align*}
where the last inequality holds by contractivity of $v\mapsto \underline{\OF}(v,\pi^*)$ proved in part~\ref{p1:IB}. Thus, $\underline{\OG}$ is contracting with respect to the $\ell_{\infty}$-norm with the rate $\gamma$. Similarly, one can show that $\overline{\OG}$ is monotone and contracting with respect to the $\ell_{\infty}$-norm with the rate $\gamma$. Finally the inequalities~\eqref{eq:2} follows from part~\ref{p1:IB} and definition of $\underline{\OG}$ and $\overline{\OG}$. \end{proof}
     
 Computing the interval Bellman operator using the equation~\eqref{eq:IB} requires solving two linear programs in $p$ and can be computationally intractable for large-scale IMDPs. We first introduce two useful notations. Consider $\mathcal{IM}=(S,A,[P],[R],\gamma)$ with $(s,a)\in S\times A$ and $v\in \real^{S}$. Then we define $\underline{\iota}(v,s,a)$ as the largest integer $j\in \{1,\ldots,n\}$ satisfying
 \begin{multline*}
 \underline{P}(j_v,s,a)\le 1- \sum_{i=1}^{j-1}\underline{P}(i_v,s,a)-\sum_{i=j+1}^{n}\overline{P}(i_v,s,a)\\ \le \overline{P}(j_v,s,a),
 \end{multline*}
 and $\overline{\iota}(v,s,a)$ as the largest integer $k\in \{1,\ldots,n\}$ satisfying
  \begin{multline*}
    \underline{P}(k_v,s,a)\le 1- \sum_{i=1}^{k-1}\overline{P}(i_v,s,a)-\sum_{i=k+1}^{n}\underline{P}(i_v,s,a)\\ \le \overline{P}(k_v,s,a).
 \end{multline*}
 Note that existence of $\underline{\iota}(v,s,a)$ and $\overline{\iota}(v,s,a)$ follows from the inequality $\sum_{s'\in S}  \underline{P}(s',s,a) \le 1\le \sum_{s'\in S}  \overline{P}(s',s,a)$. We also define the operators $\Omega^{\mathcal{IM}}:\real^{S}\times S\times A\to \real$ and $\Lambda^{\mathcal{IM}}:\real^{S}\times S\times A\to \real$ as follows: 
   \begin{align}\label{eq:omega}
       &\Omega^{\mathcal{IM}}(v,s,a)=
       \sum_{i=1}^{j} (v(i_v)-v(j_v))\underline{P}(i_v,s,a) \nonumber\\ & +  \sum_{i=j}^{n} (v(i_v)-v(j_v))\overline{P}(i_v,s,a) + v(j_v), \nonumber\\
       &\Lambda^{\mathcal{IM}}(v,s,a)=\sum_{i=1}^{k} (v(i_v)-v(k_v))\overline{P}(i_v,s,a) \nonumber\\ & +  \sum_{i=k}^{n} (v(i_v)-v(k_v))\underline{P}(i_v,s,a) + v(k_v),
   \end{align}
   where $j=\underline{\iota}(v,s,a)$ and $k=\overline{\iota}(v,s,a)$. The next proposition provides a closed-from expression for the interval Bellman-policy using the lower and upper probability transition bounds.    
 \smallskip
 \begin{proposition}[Bellman-policy operator]\label{thm:closedform}
 Consider the IMDP $\mathcal{IM}=(S,A,[P],[R],\gamma)$ with the interval Bellman-policy operator $\left[\begin{smallmatrix} \underline{\OF} \\ \overline{\OF} \end{smallmatrix}\right]$ defined in~\eqref{eq:IB}. Then 
   \begin{align*}
       \underline{\OF}(v,\pi)(s) &= \underline{R}(s,\pi(s)) +\gamma \Omega^{\mathcal{IM}}(v,s,\pi(s)),\\
     \overline{\OF}(v,\pi)(s) &= \overline{R}(s,\pi(s)) + \gamma \Lambda^{\mathcal{IM}}(v,s,\pi(s)),
   \end{align*}
   where $\Omega^{\mathcal{IM}}$ and $\Lambda^{\mathcal{IM}}$ are defined in~\eqref{eq:omega}.
 \end{proposition}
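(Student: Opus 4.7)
The plan is to prove the two formulas by solving explicitly the linear programs
$$\min_{p\in \Delta^{\mathcal{IM}}_{s,a}}\sum_{s'\in S}p(s')v(s'), \qquad \max_{p\in \Delta^{\mathcal{IM}}_{s,a}}\sum_{s'\in S}p(s')v(s'),$$
since the statement of Proposition~\ref{thm:closedform} is just the Bellman-policy operator~\eqref{eq:IB} evaluated with a closed form for these two LPs. I will present the argument for the minimum (yielding $\Omega^{\mathcal{IM}}$); the maximum case is completely symmetric with the roles of $\underline{P}$ and $\overline{P}$ swapped, yielding $\Lambda^{\mathcal{IM}}$.

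First I would reindex the states by the ordering $1_v,\ldots,n_v$ with $v(1_v)\ge \cdots\ge v(n_v)$, so that the objective becomes $\sum_{i=1}^n p(i_v)v(i_v)$. The intuitive minimizer pushes probability mass toward states with the \emph{smallest} $v$-value, i.e., toward large indices. Concretely, I propose the candidate $p^\star$ defined, with $j:=\underline{\iota}(v,s,a)$, by
$$p^\star(i_v)=\underline{P}(i_v,s,a) \text{ for } i<j, \quad p^\star(i_v)=\overline{P}(i_v,s,a) \text{ for } i>j,$$
and $p^\star(j_v)=1-\sum_{i<j}\underline{P}(i_v,s,a)-\sum_{i>j}\overline{P}(i_v,s,a)$. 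By the very definition of $\underline{\iota}(v,s,a)$, this quantity lies in $[\underline{P}(j_v,s,a),\overline{P}(j_v,s,a)]$, so $p^\star\in\Delta^{\mathcal{IM}}_{s,a}$.

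Next I would establish optimality of $p^\star$ by an exchange argument. Suppose $q\in \Delta^{\mathcal{IM}}_{s,a}$ is any feasible distribution. If $q\neq p^\star$, then because the two distributions each sum to $1$, there must exist indices $i<j'$ (in the $v$-ordering) such that $q(i_v)>p^\star(i_v)$ and $q(j'_v)<p^\star(j'_v)$, with $v(i_v)\ge v(j'_v)$. Shifting a small mass $\varepsilon>0$ from $i_v$ to $j'_v$ keeps feasibility and changes the objective by $\varepsilon(v(j'_v)-v(i_v))\le 0$. Iterating this exchange eventually transforms $q$ into $p^\star$ without increasing the objective, so $p^\star$ is a minimizer. (Equivalently, one can appeal to the standard LP fact that the optimum is attained at a vertex of the interval-probability polytope, and verify that the only vertex consistent with the monotone ordering of $v$ is $p^\star$.)

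Finally I would substitute $p^\star$ into the objective. Writing $\sum_{s'}p^\star(s')v(s')=\sum_{s'}p^\star(s')(v(s')-v(j_v))+v(j_v)$ and using $p^\star(j_v)(v(j_v)-v(j_v))=0$, the expression collapses to
$$\sum_{i<j}\underline{P}(i_v,s,a)(v(i_v)-v(j_v))+\sum_{i>j}\overline{P}(i_v,s,a)(v(i_v)-v(j_v))+v(j_v),$$
which coincides with $\Omega^{\mathcal{IM}}(v,s,a)$ in~\eqref{eq:omega} since the $i=j$ terms vanish in both sums there. Plugging this into $\underline{\OF}(v,\pi)(s)=\underline{R}(s,\pi(s))+\gamma\min_p\sum_{s'}p(s')v(s')$ yields the first formula. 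The identical procedure, with $\underline{P}\leftrightarrow\overline{P}$ and $\min\leftrightarrow\max$, gives $\overline{\OF}$ in terms of $\Lambda^{\mathcal{IM}}$.

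The main obstacle I anticipate is being precise about the exchange argument when ties in $v$ are present and when $\underline{\iota}$ is not unique; in both cases multiple minimizers exist but all give the same objective value, and the computation above still goes through unchanged. A secondary bookkeeping check is that the index $j=\underline{\iota}(v,s,a)$ is well-defined, which follows from the consistency condition $\sum_{s'}\underline{P}(s',s,a)\le 1\le \sum_{s'}\overline{P}(s',s,a)$ built into the IMDP definition.
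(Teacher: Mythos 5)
Your proof is correct and follows the same overall strategy as the paper's: identify the extremal distribution determined by the threshold index $\underline{\iota}(v,s,a)$, check its feasibility directly from the defining inequality of $\underline{\iota}$, and evaluate the objective to recover $\Omega^{\mathcal{IM}}$. The one step you handle differently is the optimality half: the paper proves $\sum_{s'}p(s')v(s')\ge \Omega^{\mathcal{IM}}(v,s,a)$ for \emph{every} feasible $p$ by a direct termwise bound --- rewriting the objective as $\sum_i (v(i_v)-v(j_v))p(i_v)+v(j_v)$ and using the sign of $v(i_v)-v(j_v)$ together with $\underline{P}\le p\le\overline{P}$ --- whereas you use an exchange argument that transforms an arbitrary feasible $q$ into $p^\star$ without increasing the objective. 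Both are valid; the paper's bound is a one-line estimate, while your exchange argument makes the greedy structure of the LP more transparent. One small imprecision to fix: the existence of a pair $i<j'$ with $q(i_v)>p^\star(i_v)$ and $q(j'_v)<p^\star(j'_v)$ does not follow merely from both vectors summing to $1$; it requires the specific structure of $p^\star$ (it sits at the lower bound for $i<j$ and at the upper bound for $i>j$, so any excess of $q$ over $p^\star$ can occur only at indices $\le j$ and any deficit only at indices $\ge j$). Finally, your candidate $p^\star$ ($\underline{P}$ before the threshold, $\overline{P}$ after, residual mass at $j_v$) is the one consistent with the definitions of $\underline{\iota}$ and $\Omega^{\mathcal{IM}}$ in~\eqref{eq:omega}; the $p^*$ displayed in the paper's proof has the roles of $\underline{P}$ and $\overline{P}$ swapped (and likewise in one intermediate inequality), which appears to be a typo --- your version is the correct one.
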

 \begin{proof}
 We start by showing that $\min_{p\in
   \Delta^{\mathcal{IM}}_{s,a}}\sum_{s'\in S}p(s')v(s') = \Omega^{\mathcal{IM}}(v,s,a)$. In the course of this proof we set $j=\underline{\iota}(v,s,a)$. Note that, we can compute
\begin{align*}
    \sum_{s'\in S}p(s')v(s') & = \sum_{i=1}^{j} (v(i_v)-v(j_v))p(i_v) \\ & + \sum_{i=j}^{n} (v(i_v)-v(j_v))p(i_v) + v(j_v),
\end{align*}
where the equality holds using the fact that $\sum_{s'\in S}p(s') =1$. The above equality implies that 
\begin{align*}
    \sum_{s'\in S}p(s')v(s') & = \sum_{i=1}^{j} (v(i_v)-v(j_v))p(i_v) \\ & \qquad + \sum_{i=j}^{n} (v(i_v)-v(j_v))p(i_v) + v(j_v) \\ & \ge \sum_{i=1}^{j} (v(i_v)-v(j_v))\overline{P}(i_v,s,a) \\ & \qquad + \sum_{i=j}^{n} (v(i_v)-v(j_v))\underline{P}(i_v,s,a) + v(j_v) \\ & \ge \Omega^{\mathcal{IM}}(v,s,a), 
\end{align*}
where the second inequality holds because $\underline{P}(i_v,s,a)\le p(i_v)\le \overline{P}(i_v,s,a)$ and because $v(i_v)-v(j_v) \ge 0$ if $i\le j$ and $v(i_v)-v(j_v) \le 0$ if $i>j$. This means that, for every $p\in \Delta^{\mathcal{IM}}_{s,a}$, we have $\sum_{s'\in S}p(s')v(s')\ge \Omega^{\mathcal{IM}}(v,s,a)$. Now we show that $\min_{p\in \Delta^{\mathcal{IM}}_{s,a}} \sum_{s'\in S}p(s')v(s') =  \Omega^{\mathcal{IM}}(v,s,a)$. Let $v\in \real^n$ and define $p^*\in \real^{S}$ as follows:
\begin{align*}
    p^*(i_v) = \begin{cases}
    \overline{P}(i_v,s,a) & i\in \{1,\ldots,j-1\}\\
    \xi & i = j\\
    \underline{P}(i_v,s,a) & i\in \{j+1,\ldots,n\}.
    \end{cases}
\end{align*}
where $\xi = 1-\sum_{i=1}^{j-1}\overline{P}(i_v,s,a)
-\sum_{i=j+1}^{n}\underline{P}(i_v,s,a)$. Note that by definition of
$\underline{\iota}(v,s,a)$, we have $p^*\le \vect{1}_{|S|}$. Moreover
$\sum_{s'\in S}p^*(s')=1$ and $\underline{P}(s',s,a)\le p^*(s')\le \overline{P}(s',s,a)$, for every $s'\in S$. Therefore $p^*\in \Delta^{\mathcal{IM}}_{s,a}$ and with this choice of $p^*$, we have $\sum_{s'\in S}p^*(s')v(s') = \Omega^{\mathcal{IM}}(v,s,a)$. The proof of $\max_{p\in \Delta^{\mathcal{IM}}_{s,a}} \sum_{s'\in S}p(s')v(s')= \Lambda^{\mathcal{IM}}(v,s,a)$ is similar and we omit it for the sake of brevity. \end{proof}
\begin{remark}[Comparison with the literature]
For finite action space $A$, Theorem~\ref{thm:closedform} recovers the results in~\cite{RG-SL-TD:00} and the pseudocode in~\cite[Figure 8]{RG-SL-TD:00}. For infinite-dimensional action space $A$, Theorem~\ref{thm:closedform} provides a simpler form compared to~\cite[Theorem 3.2]{GD-ML-MM-LL:22}.
\end{remark}
 
 \smallskip

    It is known that the notion of \emph{optimal policy}, as defined in~\eqref{eq:optimal} for MDPs, is not well-defined for IMDPs~\cite{RG-SL-TD:00}. This is due to the fact that the value functions of IMDPs are interval-valued and the set of intervals do not have a standard partial order.  However, given an IMDP $\mathcal{IM}=(S,A,[P],[R],\gamma)$, one can define two policies, namely  the \textit{pessimistic optimal policy} and the \textit{optimistic optimal policy}, which provide certain type of optimally for the value function. The pessimistic optimal policy $\pi^*_{\mathrm{p}}\in A^{S}$ is the unique policy defined by
       \begin{align*}
          \pi^*_{\mathrm{p}} = \mathrm{argmax}_{\pi\in A^{S}} \left(\min_{\mathcal{M}\in \mathcal{IM}} V^{\mathcal{M}}_{\pi}\right), 
       \end{align*}
       and the pessimistic value function is given by $V^*_{\mathrm{p}}=\min_{\mathcal{M}\in \mathcal{IM}} V^{\mathcal{M}}_{\pi^*_{\mathrm{p}}}$. The optimistic optimal policy $\pi^*_{\mathrm{o}}\in A^{S}$ is the unique policy defined by
       \begin{align*}
          \pi^*_{\mathrm{o}} = \mathrm{argmax}_{\pi\in A^{S}} \left(\max_{\mathcal{M}\in \mathcal{IM}} V^{\mathcal{M}}_{\pi} \right), 
       \end{align*}
       and the optimistic value function is given by $V^*_{\mathrm{o}}=\max_{\mathcal{M}\in \mathcal{IM}} V^{\mathcal{M}}_{\pi^*_{\mathrm{o}}}$. From a game-theoretic perspective, the pessimistic optimal policy can be considered as the optimal policy of the IMDP $\mathcal{IM}$ in presence of a competitive adversary who resolves uncertain probabilities, and the optimistic optimal policy can be considered as the optimal policy of the IMDP $\mathcal{IM}$ the presence of a cooperative agent who resolves uncertain probabilities. Given an IMDP $\mathcal{IM}$, we define the \emph{pessimistic value iteration} by
        \begin{align}\label{eq:valueBE-pes}
        v^{k+1} & = \underline{\OG}(v^k)= \max_{\pi\in A^{S}} \underline{\OF}(v^k,\pi),
       \end{align}
       and we define the \emph{optimistic value iteration} by
       \begin{align}\label{eq:valueBE-opt}
       v^{k+1}= \overline{\OG}(v^k)= \max_{\pi\in A^{S}} \overline{\OF}(v^k,\pi).
       \end{align}
       where $\left[\begin{smallmatrix}\underline{\OG}\\\overline{\OG} \end{smallmatrix}\right]$ and $\left[\begin{smallmatrix}\underline{\OF}\\\overline{\OF} \end{smallmatrix}\right]$ are the interval Bellman operator and the interval Bellman-policy operator of $\mathcal{IM}$, respectively.  
    The next theorem establishes that the pessimistic and optimistic value iterations~\eqref{eq:valueBE-pes} and~\eqref{eq:valueBE-opt} can be used to compute the pessimistic and optimistic optimal policies of IMDPs.

 \begin{theorem}[Value iterations as dynamical systems]\label{thm:pasopt}
 Consider the IMDP $\mathcal{IM}=(S,A,[P],[R], \gamma)$ with the pessimistic and optimistic policies $\pi^*_{\mathrm{p}},\pi^*_{\mathrm{o}}\in A^{S}$ with the interval Bellman-policy operator~\eqref{eq:IB} and the interval Bellman operator~\eqref{eq:IB-2}. Then, the following statements hold:
 \begin{enumerate}
     \item\label{p2:pasopt} the pessimistic value iteration~\eqref{eq:valueBE-pes} is a monotone contracting dynamical system with the unique globally exponentially stable equilibrium point $V^*_{\mathrm{p}}$ and the pessimistic optimal policy $\pi^*_{\mathrm{p}}$ is obtained by $\pi^*_{\mathrm{p}}  = \mathrm{argmax}_{\pi\in A^{S}}\underline{\OF}(V^{*}_{\mathrm{p}},\pi)$. 
     \item\label{p3:pasopt} the optimistic value iteration~\eqref{eq:valueBE-opt} is a monotone  contracting dynamical system with the unique globally exponentially stable equilibrium point $V^*_{\mathrm{o}}$ and the optimistic optimal policy $\pi^*_{\mathrm{o}}$ is obtained by $\pi^*_{\mathrm{o}}  = \mathrm{argmax}_{\pi\in A^{S}}\overline{\OF}(V^{*}_{\mathrm{o}},\pi)$.
 \end{enumerate}
 \end{theorem}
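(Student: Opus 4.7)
The plan is to split the argument into two essentially independent pieces: a dynamical-systems piece that is an almost immediate consequence of Theorem~\ref{thm:IB}, and an identification piece that links the fixed points of $\underline{\OG}$ and $\overline{\OG}$ to the game-theoretic quantities $V^*_{\mathrm{p}}$ and $V^*_{\mathrm{o}}$. The two statements in the theorem are symmetric, so I would prove the pessimistic case~\ref{p2:pasopt} in detail and obtain the optimistic case~\ref{p3:pasopt} by an analogous (and in fact simpler) argument, since in the optimistic setting both the outer and inner operations are maxima and no saddle-point reasoning is required.

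For the dynamical-systems piece, Theorem~\ref{thm:IB}\ref{p2:IB} already supplies monotonicity and $\gamma$-contractivity of $\underline{\OG}$ on the complete metric space $(\real^S,\|\cdot\|_\infty)$. Banach's fixed-point theorem then yields a unique fixed point $V^\infty_{\mathrm{p}}\in\real^S$ and global exponential convergence $\|v^k-V^\infty_{\mathrm{p}}\|_\infty \le \gamma^k\|v^0-V^\infty_{\mathrm{p}}\|_\infty$ of the iteration~\eqref{eq:valueBE-pes} from any initial condition; monotonicity of the iterated map is inherited from $\underline{\OG}$. Compactness of $A$ together with continuity of $\pi\mapsto\underline{\OF}(v,\pi)$ (visible from the closed-form expression of Proposition~\ref{thm:closedform}) ensures that the outer maximization is attained, so that $\mathrm{argmax}_{\pi\in A^S}\underline{\OF}(V^\infty_{\mathrm{p}},\pi)$ is well-defined.

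For the identification piece, I would proceed in two stages. First, fix any $\pi\in A^S$ and let $\underline{V}_\pi$ denote the unique fixed point of $v\mapsto\underline{\OF}(v,\pi)$, which exists by Theorem~\ref{thm:IB}\ref{p1:IB}. The same part of that theorem gives $\underline{\OF}(\cdot,\pi)\le \OF(\cdot,\pi)$ for every $\mathcal{M}\in\mathcal{IM}$, so iterating and using monotonicity and contraction yields $\underline{V}_\pi\le V^{\mathcal{M}}_\pi$ for all such $\mathcal{M}$. Second, because the sets $\Delta^{\mathcal{IM}}_{s,a}$ in~\eqref{eq:IB} are \emph{rectangular}---defined independently across state-action pairs---the state-action-wise minimizers $p^*_{s,\pi(s)}$ can be assembled into a bona fide worst-case MDP $\mathcal{M}^*_\pi\in\mathcal{IM}$ satisfying $\underline{V}_\pi=V^{\mathcal{M}^*_\pi}_\pi$, which upgrades the inequality to $\underline{V}_\pi=\min_{\mathcal{M}\in\mathcal{IM}}V^{\mathcal{M}}_\pi$. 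Taking the max over $\pi$ and using the identity $V^\infty_{\mathrm{p}}=\max_\pi\underline{V}_\pi$ (a direct consequence of applying $\underline{\OG}=\max_\pi\underline{\OF}(\cdot,\pi)$ at the fixed point) then gives $V^\infty_{\mathrm{p}}=\max_\pi\min_{\mathcal{M}}V^{\mathcal{M}}_\pi=V^*_{\mathrm{p}}$, with the maximizer being exactly $\pi^*_{\mathrm{p}}$.

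The main obstacle will be justifying the implicit interchange of $\max_\pi$ and $\min_{\mathcal{M}}$ underlying this identification---equivalently, showing that $V^*_{\mathrm{p}}$ is attained by a \emph{stationary deterministic} policy against a \emph{stationary} adversary kernel. This is precisely where the rectangularity of $\Delta^{\mathcal{IM}}_{s,a}$ across $(s,a)$ pairs is essential, since it permits state-action-wise assembly of the worst-case kernel without coupling constraints. I would close this gap either by appealing to the classical minimax results for robust MDPs with rectangular ambiguity (e.g.,~\cite{GNI:05,AN-LEG:05}) or by a direct induction along the value-iteration trajectory, using $\mathcal{M}^*_\pi$ to promote the one-sided bound $\underline{V}_\pi\le V^{\mathcal{M}}_\pi$ into the desired equality.
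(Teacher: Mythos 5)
Your proposal is correct, and its first half coincides with the paper's argument: the paper likewise invokes Theorem~\ref{thm:IB}\ref{p2:IB} for monotonicity and $\gamma$-contractivity of $\underline{\OG}$, applies the Banach fixed-point theorem to obtain a unique globally exponentially stable equilibrium, and reads off the optimal policy from $\underline{\OG}(V^*_{\mathrm{p}}) = \max_{\pi}\underline{\OF}(V^*_{\mathrm{p}},\pi)$. Where you genuinely diverge is the identification piece: the paper simply \emph{names} the fixed point $V^*_{\mathrm{p}}$ and, in the remark following the theorem, delegates the equality with the game-theoretic quantity $\max_{\pi}\min_{\mathcal{M}\in\mathcal{IM}}V^{\mathcal{M}}_{\pi}$ to the literature (Givan--Leach--Dean), whereas you supply that argument explicitly: the comparison $\underline{V}_\pi\le V^{\mathcal{M}}_\pi$ via monotone iteration, the assembly of a worst-case kernel from the state-wise minimizers using rectangularity of the sets $\Delta^{\mathcal{IM}}_{s,a}$, and the identity $V^\infty_{\mathrm{p}}=\max_\pi\underline{V}_\pi$ at the fixed point. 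This buys a self-contained proof of the part of the statement the paper takes on faith, and correctly isolates the one nontrivial ingredient (rectangularity) that makes the max--min attain at a stationary deterministic policy against a stationary adversary. Two small points to tighten: the worst-case MDP $\mathcal{M}^*_\pi$ must be specified on all of $S\times A$ to be a member of $\mathcal{IM}$, so say explicitly that the off-policy entries are filled with arbitrary elements of $\Delta^{\mathcal{IM}}_{s,a}$ (nonempty by the consistency assumption) and the reward with $\underline{R}$; and the attainment of $\mathrm{argmax}_{\pi\in A^S}$ over the continuous action space needs upper semicontinuity of $a\mapsto\underline{R}(s,a)+\gamma\min_{p\in\Delta^{\mathcal{IM}}_{s,a}}\sum_{s'}p(s')v(s')$, which requires continuity of $\underline{P},\overline{P},\underline{R}$ in $a$ --- a hypothesis the paper never states at this point but implicitly uses as well.
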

 \smallskip
 \begin{proof}
     Regarding part~\ref{p2:pasopt}, by Theorem~\ref{thm:IB}\ref{p2:IB}, the interval Bellman operator $\underline{\OG}$ is monotone and contracting with rate $\gamma$ with respect to the norm $\|\cdot\|_{\infty}$. This implies that the discrete-time dynamical system~\eqref{eq:valueBE-pes} is contracting with rate $\gamma$ with respect to the norm $\|\cdot\|_{\infty}$. As a result, it has a unique equilibrium point $V^{*}_{\mathrm{p}}$ which is globally exponentially stable. Note that, by the definition of interval Bellman operator $\underline{\OG}$, we have $\underline{\OG}(V^{*}_{\mathrm{p}})(s) = \max_{a\in A}\underline{\OF}(V^{*}_{\mathrm{p}}(s),a)$. Thus, the pessimistic optimal policy satisfies $\pi_{\mathrm{p}}^*(s) = \mathrm{argmax}_{a\in A}\underline{\OF}(V^{*}_{\mathrm{p}}(s),a)$, for every $s\in S$. This completes the proof of part~\ref{p2:pasopt}. Regarding part~\ref{p3:pasopt}, the proof follows from a similar argument as in the proof of part~\ref{p2:pasopt}. 
 \end{proof}
 \smallskip
 \begin{remark}[A dynamical system perspective]
 The fact that pessimistic (resp. optimistic) value iterations is contracting and the pessimistic (resp. optimistic) optimal policy is its fixed points is known in the literature~\cite[Theorems 10,11,12]{RG-SL-TD:00}. However, Theorem~\ref{thm:pasopt} provides a discrete-time dynamical system perspective to the pessimistic (resp. optimistic) value iterations~\eqref{eq:valueBE-pes} (resp. equation~\eqref{eq:valueBE-opt}) and highlights their less-studied property of monotonicity.
 \end{remark}

\section{Efficient estimation of optimal policies}

Theorem~\ref{thm:pasopt} provides iterative algorithms for computing
the optimal policies in IMDPs. It turns out that implementing the pessimistic value iterations~\eqref{eq:valueBE-pes}
 (resp. optimistic value iterations~\eqref{eq:valueBE-opt}) requires solving the following $|S|$ nonlinear optimization problems at each iteration step:
 \begin{align}\label{eq:opt}
     \underline{\pi}^k = \mathrm{argmax}_{\pi\in A^{S}} \underline{\OF}(v^k,\pi)
 \end{align}
 (resp. $\overline{\pi}^{k} = \mathrm{argmax}_{\pi\in A^{S}}\overline{\OF}(v^k,\pi)$). 
This can cause two main challenges for computing the optimal policies:
\begin{enumerate}
    \item in the absence of any structure for the optimization problems~\eqref{eq:opt}, one needs to resort to heuristic algorithms to approximate the optimal solutions of~\eqref{eq:opt}. These heuristic algorithms can introduce sizable error in estimating the optimization problem and can significantly degrade the performance of the value iterations.
    \item  even when the optimization problems~\eqref{eq:opt} is convex, it is still necessary to solve $|S|$ optimization problems with $m$ variables at each iterations of the value iterations. 
    Thus, it is computationally challenging to implement the interval value iterations for large-scale IMDPs.
\end{enumerate}
In order to address the above mentioned challenges, we study IMDPs through the lens of dynamical systems. In the rest of this section, we focus on the pessimistic value iterations and pessimistic optimal policies. A parallel framework can be developed for optimistic value iterations and optimistic optimal policies but we omit it for the sake of brevity.

\subsection{Action-space relaxation IMDP} \label{sub:relax}

In this subsection, we introduce a relaxation of a given IMDP in its action variables by providing suitable bounds on its reward functions and its probability transition functions.

 \begin{definition}[Action-space pessimistic relaxation]\label{def:concaveimdp}
 Consider an IMDP $\mathcal{IM} = (S,A,[P],[R],\gamma)$. An \emph{action-space pessimistic relaxation} of $\mathcal{IM}$ is an IMDP $\mathcal{IM}^{\mathrm{cv}}=(S,A^{\mathrm{cv}}, [P^{\mathrm{cv}}],[R^{\mathrm{cv}}],\gamma)$ such that
 \begin{enumerate}
     \item $A\subseteq A^{\mathrm{cv}}\subseteq \real^m$ and $A^{\mathrm{cv}}$ is convex and compact in $\real^m$,
     \item for every $s',s\in S$ and every $a\in A$.
\begin{align*}
    &\underline{P}(s',s,a)\le \underline{P}^{\mathrm{cv}}(s',s,a),\quad 
    \overline{P}^{\mathrm{cv}}(s',s,a)\le \overline{P}(s',s,a),\\
    &\qquad\qquad\qquad\underline{R}(s,a)\le \underline{R}^{\mathrm{cv}}(s,a),
\end{align*}
     \item for every $s\in S$, $a \mapsto \underline{R}^{\mathrm{cv}}(s,a)$ is concave on $A^{\mathrm{cv}}$,
     \item for every $s',s\in S$, $a \mapsto \overline{P}^{\mathrm{cv}}(s',s,a)$ is convex and $a \mapsto \underline{P}^{\mathrm{cv}}(s',s,a)$ is  concave.
 \end{enumerate}

 \end{definition}
\smallskip

Given an action-space pessimistic relaxation $\mathcal{IM}^{\mathrm{cv}}$ for $\mathcal{IM}$, one can define its associated interval Bellman-policy operator $\left[\begin{smallmatrix} \underline{\OF}^{\mathrm{cv}} \\ \overline{\OF}^{\mathrm{cv}} \end{smallmatrix}\right]: \real^{S}\times (A^{\mathrm{cv}})^{S}\to \real^{S}\times \real^{S}$ and the interval Bellman operator $\left[\begin{smallmatrix} \underline{\OG}^{\mathrm{cv}} \\ \overline{\OG}^{\mathrm{cv}} \end{smallmatrix}\right]: \real^{S}\times (A^{\mathrm{cv}})^{S}\to \real^{S}\times \real^{S}$ as in equations~\eqref{eq:IB} and~\eqref{eq:IB-2}, respectively. Then, the pessimist value iterations for $\mathcal{IM}^{\mathrm{cv}}$ is given by 
        \begin{align}\label{eq:valueBE-pes-concave}
        &v^{k+1} = \underline{\OF}^{\mathrm{cv}}(v^{k},\pi^k),\nonumber\\
        &\pi^k = \mathrm{argmax}_{\pi\in (A^{\mathrm{cv}})^S} \underline{\OF}^{\mathrm{cv}}(v^{k},\pi),
       \end{align}
       and the pessimistic optimal value and the pessimistic optimal policy of $\mathcal{IM}^{\mathrm{cv}}$ is denoted by $V^{\mathrm{cv},*}_{\mathrm{p}}$ and $\pi^{\mathrm{cv},*}_{\mathrm{p}}$, respectively. 
    Given an IMDP $\mathcal{IM}$, the next theorem shows that the Bellman operator of the action-space pessimistic relaxation $\mathcal{IM}^{\mathrm{cv}}$ is an upper bound for the Bellman operator of $\mathcal{IM}$. 
    Using the classical comparison theorem for the pessimistic value iterations~\eqref{eq:valueBE-pes-concave}, it can be shown that the pessimistic optimal value of $\mathcal{IM}^{\mathrm{cv}}$ is an upper bound for the pessimistic optimal value of $\mathcal{IM}$.

     \begin{theorem}[Bellman operator of pessimistic relaxation]\label{thm:concave}
     Consider the IMDP $\mathcal{IM} = (S,A,[P],[R],\gamma)$ with an associated action-space pessimistic relaxation IMDP $\mathcal{IM}^{\mathrm{cv}} = (S,A^{\mathrm{cv}},[P^{\mathrm{cv}}],[R^{\mathrm{cv}}],\gamma)$. Then,
\begin{enumerate}
    \item\label{p1:concave} for every $v\in \real^{S}$ and $\pi \in A^{S}$, $\underline{\OF}(v,\pi)\le \underline{\OF}^{\mathrm{cv}}(v,\pi)$,
     \item\label{p2.5:concave} for every $s\in S$ and every $v\in \real^{S}$, $\pi\to \underline{\OF}^{\mathrm{cv}}(v,\pi)(s)$,
    is a concave function on $A^{\mathrm{cv}}$. 
    \item\label{p2:concave} for every $v\in \real^{S}$, $\underline{\OG}(v)\le \underline{\OG}^{\mathrm{cv}}(v)$. 
    \item\label{p3:concave}  we have $V^*_{\mathrm{p}}\le V^{\mathrm{cv},*}_{\mathrm{p}}$.
\end{enumerate}
     \end{theorem}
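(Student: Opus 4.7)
My plan is to handle the three routine claims via set-inclusion and fixed-point comparison arguments, and reserve the concavity claim in part~\ref{p2.5:concave} as the substantive step. For part~\ref{p1:concave}, I would first note that Definition~\ref{def:concaveimdp} tightens the probability intervals ($\underline{P} \le \underline{P}^{\mathrm{cv}}$ and $\overline{P}^{\mathrm{cv}} \le \overline{P}$), so any $p$ satisfying the constraints in~\eqref{eq:deltasa} for $\mathcal{IM}^{\mathrm{cv}}$ also satisfies them for $\mathcal{IM}$; equivalently, $\Delta^{\mathcal{IM}^{\mathrm{cv}}}_{s,a} \subseteq \Delta^{\mathcal{IM}}_{s,a}$. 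Minimizing the same linear functional $p \mapsto \sum_{s'} p(s') v(s')$ over the smaller set yields a larger value, and combining this with $\underline{R}^{\mathrm{cv}} \ge \underline{R}$ delivers $\underline{\OF}(v,\pi) \le \underline{\OF}^{\mathrm{cv}}(v,\pi)$.

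For part~\ref{p2.5:concave}, I would invoke Proposition~\ref{thm:closedform} applied to $\mathcal{IM}^{\mathrm{cv}}$ to write $\underline{\OF}^{\mathrm{cv}}(v,\pi)(s) = \underline{R}^{\mathrm{cv}}(s,\pi(s)) + \gamma\, \Omega^{\mathcal{IM}^{\mathrm{cv}}}(v,s,\pi(s))$. The reward term is concave in $\pi(s)$ by Definition~\ref{def:concaveimdp}(iii). For the $\Omega$ term, I would read off from~\eqref{eq:omega} that, for $i \le j$, the factor $v(i_v)-v(j_v) \ge 0$ multiplies the concave map $a \mapsto \underline{P}^{\mathrm{cv}}(i_v,s,a)$, while for $i \ge j$, the factor $v(i_v)-v(j_v) \le 0$ multiplies the convex map $a \mapsto \overline{P}^{\mathrm{cv}}(i_v,s,a)$; both products are concave in $a$, and $v(j_v)$ is constant. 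Summing preserves concavity for each fixed $j$, and adding the concave $\underline{R}^{\mathrm{cv}}$ closes the argument for a single regime. The subtle step is that $j = \underline{\iota}(v,s,a)$ itself depends on $a$, so I would bridge per-regime concavity to global concavity by representing $\Omega^{\mathcal{IM}^{\mathrm{cv}}}(v,s,\cdot)$ as a minimum over admissible regime indices of the per-$j$ concave expressions, exploiting that the minimum of concave functions is concave.

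Part~\ref{p2:concave} then follows from part~\ref{p1:concave} by taking $\max_{\pi\in A^S}$ on both sides and enlarging the outer maximization to $(A^{\mathrm{cv}})^S$, which can only increase the value since $A \subseteq A^{\mathrm{cv}}$. For part~\ref{p3:concave}, I would apply Theorem~\ref{thm:IB} to $\mathcal{IM}^{\mathrm{cv}}$ to conclude that $\underline{\OG}^{\mathrm{cv}}$ is monotone and $\gamma$-contracting in the $\ell_\infty$-norm, and Theorem~\ref{thm:pasopt} to identify $V^{\mathrm{cv},*}_{\mathrm{p}}$ as its unique globally exponentially stable fixed point. Starting the iteration from $v^0 = V^*_{\mathrm{p}}$, part~\ref{p2:concave} gives $\underline{\OG}^{\mathrm{cv}}(v^0) \ge \underline{\OG}(v^0) = v^0$; monotonicity propagates $v^{k+1} \ge v^k$ along the orbit, and contractivity forces $v^k \to V^{\mathrm{cv},*}_{\mathrm{p}}$, so the limit inequality $V^{\mathrm{cv},*}_{\mathrm{p}} \ge V^*_{\mathrm{p}}$ follows. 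The main obstacle I foresee is the regime-transition issue inside part~\ref{p2.5:concave}: justifying that the per-regime concave expressions truly assemble into a globally concave function of the action variable, when the active index $\underline{\iota}(v,s,a)$ can switch as $a$ varies, is the real technical crux of the theorem.
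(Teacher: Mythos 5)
Your handling of parts (i), (iii), and (iv) is sound. For part (i) you argue via the set inclusion $\Delta^{\mathcal{IM}^{\mathrm{cv}}}_{s,a}\subseteq\Delta^{\mathcal{IM}}_{s,a}$ (tighter interval bounds shrink the ambiguity set, so the minimum can only increase), which is cleaner than the paper's term-by-term comparison of the closed-form expression $\Omega^{\mathcal{IM}^{\mathrm{cv}}}$ from Proposition~\ref{thm:closedform}; for part (iv) your monotone-orbit argument starting from $v^0=V^*_{\mathrm{p}}$, using $\underline{\OG}^{\mathrm{cv}}(v^0)\ge \underline{\OG}(v^0)=v^0$ together with monotonicity and contractivity, is a correct, self-contained version of the comparison theorem that the paper simply cites.

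The gap is in part (ii), and you have put your finger on exactly the right spot --- the dependence of the active index $j=\underline{\iota}(v,s,a)$ on $a$ --- but your proposed bridge goes in the wrong direction. You claim $\Omega^{\mathcal{IM}^{\mathrm{cv}}}(v,s,\cdot)$ is the \emph{minimum} over regime indices $j$ of the per-$j$ concave expressions $f_j(a)$. In fact $f_j(a)$ is the dual objective of the linear program $\min_{p\in\Delta^{\mathcal{IM}^{\mathrm{cv}}}_{s,a}}\sum_{s'}p(s')v(s')$ evaluated at the dual multiplier $\lambda=v(j_v)$, so weak duality gives $f_j(a)\le \Omega^{\mathcal{IM}^{\mathrm{cv}}}(v,s,a)$ for every $j$, with equality at the active index; hence $\Omega^{\mathcal{IM}^{\mathrm{cv}}}(v,s,\cdot)=\max_j f_j(\cdot)$, a pointwise \emph{maximum} of concave functions, which need not be concave. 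This is not a repairable slip in your write-up alone: take $S=\{1,2\}$, $v=(1,0)$, $A^{\mathrm{cv}}=[0,1]$, $\underline{P}^{\mathrm{cv}}(1,s,a)=0.1+0.4a$, $\overline{P}^{\mathrm{cv}}(1,s,a)=1$, $\underline{P}^{\mathrm{cv}}(2,s,a)=0$, $\overline{P}^{\mathrm{cv}}(2,s,a)=0.5+0.4a$; all bounds are affine, so the concavity/convexity requirements of Definition~\ref{def:concaveimdp} hold, yet
\begin{align*}
\Omega^{\mathcal{IM}^{\mathrm{cv}}}(v,s,a)=\max\{0.1+0.4a,\;0.5-0.4a\},
\end{align*}
which equals $0.5$ at $a=0$ and $a=1$ but $0.3$ at $a=1/2$, so it is not concave. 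The paper's own proof of part (ii) contains the same defect (it silently treats $j$ as fixed while $a$ varies), so the regime-switching issue you flagged is an obstruction to the statement as written, not merely to your argument; some additional hypothesis on $[P^{\mathrm{cv}}]$ (e.g.\ one forcing a single active regime over $A^{\mathrm{cv}}$) would be needed for the concavity claim to hold.
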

     \smallskip
   \begin{proof}  
   Regarding part~\ref{p1:concave}, recall the definition of $\Omega^{\mathcal{IM}^{\mathrm{cv}}}(v,s,a)$ in equation~\eqref{eq:omega} for the IMDP $\mathcal{IM}^{\mathrm{cv}}$. Then, 
   \begin{align*}
\Omega^{\mathcal{IM}^{\mathrm{cv}}}(v,s,\pi(s))&= \sum_{i=1}^{j} (v(i_v)-v(j_v))\underline{P}^{\mathrm{cv}}(i_v,s,a) \\ & +  \sum_{i=j}^{n} (v(i_v)-v(j_v))\overline{P}^{\mathrm{cv}}(i_v,s,a) + v(j_v) \\ & \ge  \sum_{i=1}^{j} (v(i_v)-v(j_v))\underline{P}(i_v,s,a) \\ & +  \sum_{i=j}^{n} (v(i_v)-v(j_v))\overline{P}(i_v,s,a) + v(j_v), 
   \end{align*}
   where $j=\underline{\iota}^{\mathrm{cv}}(v,s,a)$ is as defined in~\eqref{eq:omega}. Note that the first inequality above holds because, for every $i\in \{1,\ldots,j\}$, we have $v(i_v)-v(j_v)\ge 0$ and $\underline{P}^{\mathrm{cv}}(i_v,s,a)\ge \underline{P}(i_v,s,a)$ and, for every $i\in \{j,\ldots,n\}$, we have $v(i_v)-v(j_v)\le 0$ and $\overline{P}^{\mathrm{cv}}(i_v,s,a)\le \overline{P}(i_v,s,a)$. Given $\pi\in A^S$, we define $p^*:S \to \real$ by  
   \begin{align*}
    p^*(i_v) = \begin{cases}
    \underline{P}(i_v,s,\pi(s)) & i\in \{1,\ldots,j-1\}\\
    \xi & i = j\\
    \overline{P}(i_v,s,\pi(s)) & i\in \{j+1,\ldots,n\}.
    \end{cases}
\end{align*}
where $\xi = 1-\sum_{i=1}^{j-1}\underline{P}(i_v,s,\pi(s)) -\sum_{i=j+1}^{n}\overline{P}(i_v,s,\pi(s))$. It is easy to check $p^*\in \Delta^{\mathcal{IM}}_{s,\pi(s)}$ and
\begin{align*}
\Omega^{{\mathcal{IM}}^{\mathrm{cv}}}(v,s,\pi(s))\ge \sum_{s'\in S}p^*(s')v(s').
   \end{align*}
   As a result, using Proposition~\ref{thm:closedform}, we get
    \begin{align*}
      \underline{\OF}^{\mathrm{cv}}&(v,\pi)(s) = \underline{R}^{\mathrm{cv}}(s,\pi(s)) + \gamma \Omega^{\mathcal{IM}^{\mathrm{cv}}}(v,s,\pi(s)) \\ & \ge \underline{R}(s,\pi(s)) + \gamma \min_{\pi\in \Delta^{\mathcal{IM}}_{s,\pi(s)}} \sum_{s'\in S}p(s')v(s') =  \underline{\OF}(v,\pi)(s).
   \end{align*}
   where the last equality holds by the definition of $\underline{\OF}$.  
   Regarding part~\ref{p2.5:concave}, first note that, for every $i\in \{1,\ldots,j\}$, we have $v(i_v)-v(j_v)\ge 0$ and $a\mapsto \underline{P}^{\mathrm{cv}}(i_v,s,a)$ is concave and, for every $i\in \{j,\ldots,n\}$, we have $v(i_v)-v(j_v)\le 0$ and $a\mapsto \overline{P}^{\mathrm{cv}}(i_v,s,a)$ is convex. This implies that $\pi\mapsto \Omega^{\mathcal{IM}^{\mathrm{cv}}}(v,s,\pi(s))$ is an concave function. Moreover, 
   \begin{align*}
       \underline{\OF}^{\mathrm{cv}}(v,\pi)(s) = \underline{R}^{\mathrm{cv}}(s,\pi(s)) + \gamma \Omega^{\mathcal{IM}^{\mathrm{cv}}}(v,s,\pi(s))
   \end{align*}
   Since $a\mapsto\underline{R}^{\mathrm{cv}}(s,a)$ is concave, we can deduce that $\pi\mapsto \underline{\OF}^{\mathrm{cv}}(v,\pi)(s)$ is a concave function. Regarding part~\ref{p2:concave}, the fact that $\underline{\OG}(v)\le \underline{\OG}^{\mathrm{cv}}(v)$ follows from definition of $\underline{\OG}$ in~\eqref{eq:IB-2}. Regarding part~\ref{p3:concave}, by Theorem~\ref{thm:pasopt}\ref{p2:pasopt}, the discrete-time dynamical systems~\eqref{eq:valueBE-pes} and~\eqref{eq:valueBE-pes-concave} are monotone and contracting with respect to $\ell_{\infty}$-norm. Note that $\underline{\OG}(v)\le \underline{\OG}^{\mathrm{cv}}(v)$, for every $v\in \real^S$. Therefore, we can use the comparison theorem~\cite[Theorem 3.8.1]{ANM-LH-DL:08-new}, to get $V^*_{\mathrm{p}}\le V^{\mathrm{cv},*}_{\mathrm{p}}$. 
     \end{proof}
     \smallskip
     \begin{remark}The following remarks are in order.
     \begin{enumerate}
         \item \textit{(Computational efficiency)}: using the action-space pessimistic relaxation $\mathcal{IM}^{\mathrm{cv}}$, Theorem~\ref{thm:concave} develops the iteration scheme~\eqref{eq:valueBE-pes-concave} for over-approximating the pessimistic optimal value function of the original IMDP $\mathcal{IM}$. Since the interval Bellman-policy operator is concave in $\pi$, standard convex optimization algorithms (see~\cite{SPB-LV:04}) can be employed to solve the optimization problem at each iteration of~\eqref{eq:valueBE-pes-concave}. 
         
         \item  \textit{(Novelty)}: to the best of our knowledge, \cite{GD-ML-MM-LL:22} is the first paper that proposes to use the concave/convex bounds on the parameters of the IMDPs to approximate its optimal policies. Compared to~\cite{GD-ML-MM-LL:22}, Definition~\ref{def:concaveimdp} and Theorem~\ref{thm:concave} develop a rigorous framework to bound the parameters of the IMDP and provide guarantees for over-approximation of their optimal values. Moreover, our framework is capable of dealing with IMDPs with action-dependent reward functions.  
     \end{enumerate}
     \end{remark}

     \subsection{Iteration-distributed optimization}\label{sub:TDO}

      In practice, estimating the optimal policies using the pessimistic value iterations~\eqref{eq:valueBE-pes-concave} requires solving $|S|$ concave optimization problems with $m$ variables at each iteration step, which can become computationally intractable for IMDPs with large state-space. In this subsection, we consider the pessimistic value iterations~\eqref{eq:valueBE-pes-concave} as the interconnection of a dynamical system described by the value iterations:
      \begin{align}\label{eq:dy-pess}
      v^{k+1} = \underline{\OF}^{\mathrm{cv}}(v^{k},\pi^k),%
      \end{align}
      and an optimization-based feedback controller described by:
      \begin{align}\label{eq:controller-pess}
          \pi^k = \mathrm{argmax}_{\pi\in (A^{\mathrm{cv}})^{S}}\underline{\OF}^{\mathrm{cv}}(v^{k},\pi).
      \end{align}
      Using this system-theoretic perspective toward interval value iterations~\eqref{eq:valueBE-pes-concave}, we propose to implement the optimization-based feedback controller in a distributed fashion. We first need to introduce the following assumption on the IMDPs. 
     
 \begin{assumption}\label{ass:boundreward}
 For the IMDP $\mathcal{IM} = (S,A,[P],[R],\gamma)$,
 \begin{enumerate}
     \item \emph{(Bounded rewards):} there exists $\underline{m}\in \real_{\ge 0}$ such that
     $\sup_{s\in S, a\in A}\underline{R}(s,a) \le \underline{m}$, 
     \item \emph{(Regularity in action variables):} the map $ a \mapsto \underline{R}(s,a)$ is twice continuously differentiable and $c$-strongly concave and $L$-smooth, uniformly in $s\in S$, and the maps 
         \begin{align*}
         &a \mapsto \underline{P}(s',s,a), && a \mapsto \overline{P}(s',s,a),
     \end{align*}
     are twice continuously differentiable and $L$-smooth, for every $s',s\in S$.
 \end{enumerate}
\end{assumption}

Given an IMDP $\mathcal{IM}$ with an action-space pessimistic relaxation $\mathcal{IM}^{\mathrm{cv}}$, we replace the feedback controller described by the optimization problem~\eqref{eq:controller-pess} with $\ell\in \mathbb{Z}_{\ge 0}$ iteration of the projected gradient descent operator $\underline{\OT}^{\mathrm{cv}}:\real^S\times (A^{\mathrm{cv}})^{S}\times \real_{\ge 0}\to \real^S$, 
\begin{align*}
    \underline{\OT}^{\mathrm{cv}}( v,\pi,\beta): = \mathrm{Proj}_{(A^{\mathrm{cv}})^{S}}\left(\pi + \beta \tfrac{\partial }{\partial \pi} \underline{\OF}^{\mathrm{cv}}(v,\pi)\right), 
\end{align*}
where $\beta\ge 0$ is a learning rate.  
The interconnection between the pessimistic value iteration and the pessimistic projected gradient descent operator is shown in Figure~\ref{fig:interconnect}.
\begin{figure}
 \begin{center}
		\includegraphics[width =0.95\linewidth,clip]{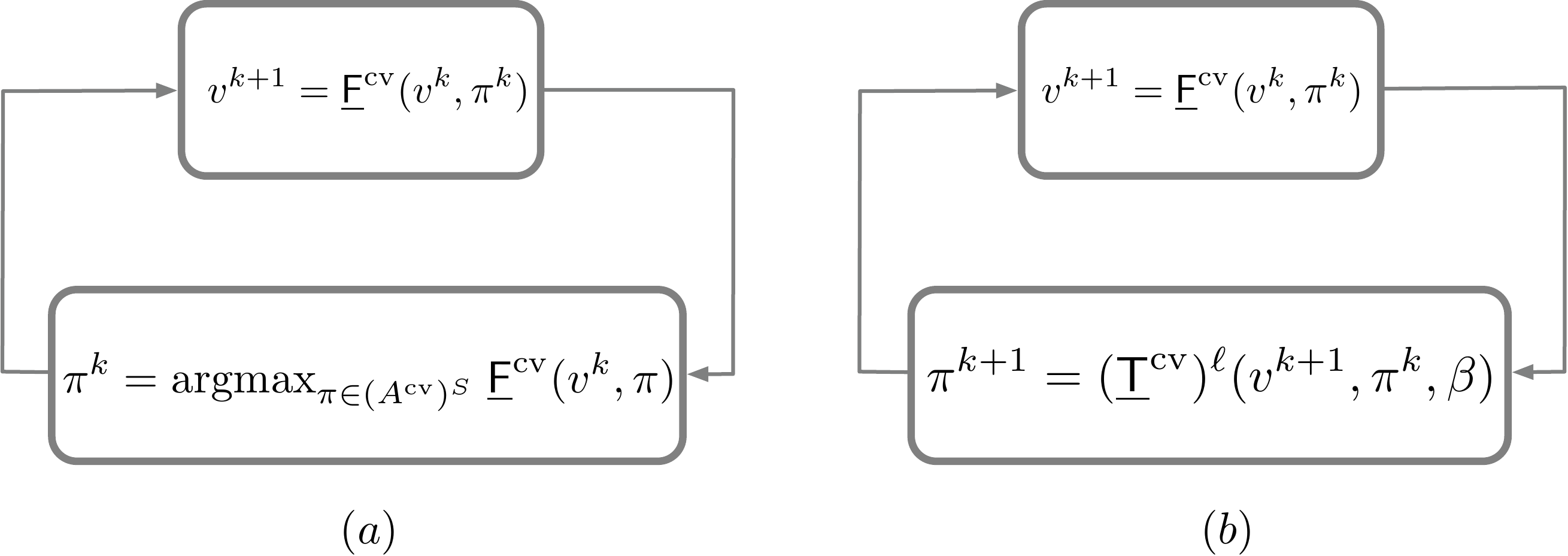}
	\end{center}
	\caption{Value-iterations as a feedback system: (a) is the interconnection between the pessimistic value-iterations and the policy optimization problem, and (b) is the feedback interconnection between the pessimistic value-iterations and the $\ell$ iterations of the projected gradient descent $\underline{\OT}^{\mathrm{cv}}$.}
	\label{fig:interconnect}
\end{figure}
As a result, we define the \emph{pessimistic value-policy iteration} by
\begin{align}\label{eq:iter-pes}
   v^{k+1}&= \underline{\OF}^{\mathrm{cv}}(v^{k},\pi^k),\nonumber \\ \pi^{k+1}&=  (\underline{\OT}^{\mathrm{cv}})^{\ell}( v^{k+1},\pi^k,\beta).
 \end{align}
In order to analyze the pessimistic value-policy iteration, we introduce the map $\pi^*:\real^S\to (A^{\mathrm{cv}})^S\to (A^{\mathrm{cv}})^S$ by 
\begin{align}\label{eq:pstar}
   \pi^*(v) = \mathrm{argmax}_{\sigma\in (A^{\mathrm{cv}})^{S}} \underline{\OF}^{\mathrm{cv}}(v,\sigma). 
\end{align}
The next theorem shows that the interconnection between the pessimistic value iterations and the iteration-distributed optimization described in~\eqref{eq:iter-pes} can be used to approximate the pessimistic optimal value of $\mathcal{IM}^{\mathrm{cv}}$.
 
 \begin{theorem}[Value-policy iterations]\label{thm:distributed}
Consider the IMDP $\mathcal{IM}=(S,A,[P],[R],\gamma)$ with an action-space pessimistic relaxation IMDP $\mathcal{IM}^{\mathrm{cv}} = (S,A^{\mathrm{cv}},[P^{\mathrm{cv}}],[R^{\mathrm{cv}}],\gamma)$ that satisfies Assumption~\ref{ass:boundreward}. Then,
\begin{enumerate}
    \item\label{p1} the compact set $\mathcal{X} = \setdef{(v,\pi)\in \real^{S}_{\ge 0}\times (A^{\mathrm{cv}})^{S}}{v\le \left(\tfrac{\underline{m}}{1-\gamma}\right)\vect{1}_{|S|}}$
    is a forward invariant set for pessimistic value-policy iterations~\eqref{eq:iter-pes}, 
\item\label{p2} let $\{(v^k,\pi^k)\}_{k=0}^{\infty}$ be
    the solution of the pessimistic value-policy iteration~\eqref{eq:iter-pes} starting from  $(v^0=0,\pi^0)\in \mathcal{X}$ with learning rate $\beta = \frac{1}{L}$. Then, for every $k\in \mathbb{Z}_{\ge 1}$,
    \begin{align*}
        v^k \le V^{\mathrm{cv,*}}_{\mathrm{p}} \le v^k + \left(\gamma^k\|v^0-V^{\mathrm{cv,*}}_{\mathrm{p}}\|_{\infty} + \tfrac{(1-\gamma^k)\varepsilon}{1-\gamma}\right)\vect{1}_{|S|},
    \end{align*}
    where $\varepsilon = \sup_{(v,\pi)\in \mathcal{X}}\left\|\tfrac{\partial \underline{\OF}^{\mathrm{cv}}}{\partial \pi}(v,\pi)\right\|_{\infty} (1-\frac{c}{L})^{\ell}\|A^{\mathrm{cv}}\|_{\infty}$.     
    \end{enumerate}

 \end{theorem}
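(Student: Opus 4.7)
The plan has four steps mirroring the two claims of the theorem.

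\textbf{Step 1 (Forward invariance).} To establish~\ref{p1}, I would argue componentwise. For the policy component, $\pi^{k+1}=(\underline{\OT}^{\mathrm{cv}})^\ell(v^{k+1},\pi^k,\beta)$ is by construction an image of the orthogonal projection onto $(A^{\mathrm{cv}})^S$, so membership is automatic. For the value component, I would show by direct computation on the interval Bellman-policy operator that, whenever $0\le v^k\le \tfrac{\underline{m}}{1-\gamma}\vect{1}_{|S|}$, the non-negativity of the relaxed rewards and of the probability weights gives $v^{k+1}=\underline{\OF}^{\mathrm{cv}}(v^k,\pi^k)\ge 0$, while the upper reward bound yields, for every $s$,
\begin{align*}
v^{k+1}(s)\le \underline{m}+\gamma\cdot\tfrac{\underline{m}}{1-\gamma}=\tfrac{\underline{m}}{1-\gamma},
\end{align*}
so $\mathcal{X}$ is forward invariant.

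\textbf{Step 2 (Lower inequality $v^k\le V^{\mathrm{cv},*}_{\mathrm{p}}$).} Since $\underline{\OF}^{\mathrm{cv}}(v,\pi)\le \underline{\OG}^{\mathrm{cv}}(v)$ for every admissible $\pi$, and $\underline{\OG}^{\mathrm{cv}}$ is monotone (Theorem~\ref{thm:IB}\ref{p2:IB} applied to $\mathcal{IM}^{\mathrm{cv}}$), induction starting from $v^0=0\le V^{\mathrm{cv},*}_{\mathrm{p}}$ gives $v^{k+1}\le \underline{\OG}^{\mathrm{cv}}(v^k)\le \underline{\OG}^{\mathrm{cv}}(V^{\mathrm{cv},*}_{\mathrm{p}})=V^{\mathrm{cv},*}_{\mathrm{p}}$.

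\textbf{Step 3 (Per-step suboptimality of the inner PGD).} The map $\pi\mapsto\underline{\OF}^{\mathrm{cv}}(v,\pi)(s)$ depends only on the component $\pi(s)$, decoupling the optimization across states. Under Assumption~\ref{ass:boundreward}, this scalar-valued function is $c$-strongly concave and $L$-smooth in $\pi(s)\in A^{\mathrm{cv}}$. A standard computation shows that one gradient-plus-projection step with $\beta=1/L$ is a contraction of rate $(1-c/L)$ toward the unique maximizer $\pi^*(v)(s)$, since $I-(1/L)\nabla^2_a(-\underline{\OF}^{\mathrm{cv}})$ has spectrum in $[0,1-c/L]$ and the projection is nonexpansive. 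Iterating $\ell$ times and bounding the initial distance by $\|A^{\mathrm{cv}}\|_\infty$ yields
\begin{align*}
\|\pi^{k+1}-\pi^*(v^{k+1})\|_\infty\le (1-c/L)^\ell\,\|A^{\mathrm{cv}}\|_\infty,
\end{align*}
and a first-order bound using the gradient supremum then gives
\begin{align*}
\bigl|\underline{\OF}^{\mathrm{cv}}(v^{k+1},\pi^*(v^{k+1}))(s)-\underline{\OF}^{\mathrm{cv}}(v^{k+1},\pi^{k+1})(s)\bigr|\le \varepsilon
\end{align*}
uniformly in $s$, with $\varepsilon$ as defined in the theorem statement.

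\textbf{Step 4 (Error recursion and closed form).} Writing $e^k:=V^{\mathrm{cv},*}_{\mathrm{p}}-v^k\ge 0$ and splitting
\begin{align*}
e^{k+1}=\bigl[\underline{\OG}^{\mathrm{cv}}(V^{\mathrm{cv},*}_{\mathrm{p}})-\underline{\OG}^{\mathrm{cv}}(v^k)\bigr]+\bigl[\underline{\OF}^{\mathrm{cv}}(v^k,\pi^*(v^k))-\underline{\OF}^{\mathrm{cv}}(v^k,\pi^k)\bigr],
\end{align*}
I would apply the $\gamma$-contraction of $\underline{\OG}^{\mathrm{cv}}$ (Theorem~\ref{thm:IB}\ref{p2:IB}) to the first bracket and the bound from Step~3 to the second, obtaining $\|e^{k+1}\|_\infty\le \gamma\|e^k\|_\infty+\varepsilon$. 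Unrolling this scalar recursion gives the geometric-plus-steady-state bound $\|e^k\|_\infty\le \gamma^k\|e^0\|_\infty+\tfrac{1-\gamma^k}{1-\gamma}\varepsilon$, and combining with $e^k\ge 0$ from Step~2 produces the stated two-sided envelope.

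\textbf{Main obstacle.} The delicate point is Step~3: one must verify that the full inner objective (reward plus $\gamma\Omega^{\mathcal{IM}^{\mathrm{cv}}}$) is genuinely $c$-strongly concave and $L$-smooth on the relaxed action set. Strong concavity is inherited from the reward term because $\Omega^{\mathcal{IM}^{\mathrm{cv}}}$ is concave by Theorem~\ref{thm:concave}\ref{p2.5:concave}, but smoothness requires propagating the $L$-smoothness hypothesis from the raw transition bounds to the piecewise sums defining $\Omega^{\mathcal{IM}^{\mathrm{cv}}}$, taking care at the index $\underline{\iota}^{\mathrm{cv}}(v,s,a)$ where the structure can change with $a$. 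Once this smoothness transfer is handled, the rest of the argument is essentially a contractive-perturbation recursion.
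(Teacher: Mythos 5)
Your proposal is correct and follows essentially the same route as the paper's proof: the same invariance computation via the closed form of $\underline{\OF}^{\mathrm{cv}}$, the same monotone-comparison argument for $v^k\le V^{\mathrm{cv},*}_{\mathrm{p}}$, the same splitting of $v^{k+1}-V^{\mathrm{cv},*}_{\mathrm{p}}$ into a $\gamma$-contraction term plus a policy-suboptimality term bounded by the gradient supremum times the projected-gradient-descent error $(1-c/L)^{\ell}\|A^{\mathrm{cv}}\|_{\infty}$, and the same unrolled recursion. The "main obstacle" you flag (transferring $c$-strong concavity and $L$-smoothness from the rewards and transition bounds to the full objective through $\Omega^{\mathcal{IM}^{\mathrm{cv}}}$) is exactly the step the paper handles via its Hessian inequality and then asserts for smoothness, so your plan is faithful to the published argument.
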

 \smallskip
\begin{proof}
Regarding part~\ref{p1}, it is easy to show that $\mathcal{X}$ is closed and bounded. So it is compact. Now we assume $v^k\in \mathcal{X}$ and we show that $v^{k+1}\in \mathcal{X}$. Using Proposition~\ref{thm:closedform}, for every $s\in S$,  
\begin{align*}
    v^{k+1}(s) & = \underline{R}^{\mathrm{cv}}(s,\pi^{k}(s)) + \gamma \Omega^{\mathcal{IM}^{\mathrm{cv}}}(v^{k},s,\pi^{k}(s)) \\ & \le \underline{m} + \gamma \sum_{i=1}^{j-1} v^k(i_v)\underline{P}^{\mathrm{cv}}(i_v,s,a) \\ & + \gamma \sum_{i=j+1}^{n} v^k(i_v)\overline{P}^{\mathrm{cv}}(i_v,s,a) + \gamma\xi v^k(j_v),
\end{align*}
where $\xi = 1-\sum_{i=1}^{j-1}\underline{P}(i_v,s,a) -\sum_{i=j+1}^{n}\overline{P}(i_v,s,a)$ and $j=\underline{\iota}^{\mathrm{cv}}(v,s,a)$. Using the fact that $v^k(s)\le \frac{\underline{m}}{1-\gamma}$, for every $s\in S$, we get $
    v^{k+1}(s)  \le 
     \underline{m} +\frac{\gamma \underline{m}}{1-\gamma} \le \frac{ \underline{m}}{1-\gamma}$. 
This implies that $v^{k+1}\le \frac{ \underline{m}}{1-\gamma}\vect{1}_{|S|}$. This means that $v^{k+1}\in \mathcal{X}$ and thus $\mathcal{X}$ is a forward invariant set for the discrete-time system~\eqref{eq:iter-pes}. 

Regarding part~\ref{p2}, for every $k\in \mathbb{Z}_{\ge 0}$, 
\begin{align*}
    v^{k+1} = \underline{\OF}^{\mathrm{cv}}(v^k,\pi^k) \le \underline{\OF}^{\mathrm{cv}}(v^k,\pi^*(v^k)) = \underline{\OG}^{\mathrm{cv}}(v^k).
\end{align*}
By Theorem~\ref{thm:IB}\ref{p2:IB}, the map $v\mapsto \underline{\OG}^{\mathrm{cv}}(v)$ is monotone and $V^{\mathrm{cv},*}_{\mathrm{p}}$ is the unique fixed point of $v\mapsto \underline{\OG}^{\mathrm{cv}}(v)$. Moreover, we have $v_0=0\le V^{\mathrm{cv},*}_{\mathrm{p}}$. Thus, by the classical monotone comparison theorem~\cite[Theorem 3.8.1]{ANM-LH-DL:08-new}, we have $v^k\le V^{\mathrm{cv},*}_{\mathrm{p}}$, for every $k\in \mathbb{Z}_{\ge 0}$. We can rewrite the pessimistic value-policy iterations~\eqref{eq:iter-pes} as follows:
\begin{align*}
    v^{k+1} = \underline{\OF}^{\mathrm{cv}}(v^k,\pi^k) &= \underline{\OF}^{\mathrm{cv}}(v^k,\pi^*(v^k)) \\ &+ \int_{0}^{1} \tfrac{\partial \underline{\OF}^{\mathrm{cv}}}{\partial \pi} (v^k,\pi^*(v^k)+ s y^k) y ds.
\end{align*} 
where $y^k := \pi^k - \pi^*(v^k)$, for every $k\in \mathbb{Z}_{\ge 0}$. Thus, 
\begin{align*}
    v^{k+1} - V^{\mathrm{cv},*}_{\mathrm{p}} &= \underline{\OG}^{\mathrm{cv}}(v^k) - \underline{\OG}^{\mathrm{cv}}(V^{\mathrm{cv},*}_{\mathrm{p}}) \\ & + \int_{0}^{1} \tfrac{\partial \underline{\OF}^{\mathrm{cv}}}{\partial \pi} (v^k,\pi^*(v^k)+ s y^k) y ds.
\end{align*}
By Theorem~\ref{thm:IB}\ref{p2:IB}, the map $v\mapsto \underline{\OG}^{\mathrm{cv}}(v)$ is contracting with respect to the $\ell_{\infty}$-norm with rate $\gamma$. This implies that 
\begin{align*}
    \|v^{k+1} - V^{\mathrm{cv},*}_{\mathrm{p}}\|_{\infty} &\le  \gamma \|v^{k} - V^{\mathrm{cv},*}_{\mathrm{p}}\|_{\infty} \\ & + \left\|\int_{0}^{1} \tfrac{\partial \underline{\OF}^{\mathrm{cv}}}{\partial \pi} (v^k,\pi^*(v^k)+ s y^k) y^k ds\right\|_{\infty}.
\end{align*}
Now, we bound the second term on the RHS of the above inequality. Using the triangle inequality, for every $k\in \mathbb{Z}_{\ge 0}$, 
\begin{multline*}
    \left\|\int_{0}^{1} \tfrac{\partial \underline{\OF}^{\mathrm{cv}}}{\partial \pi} (v^k,\pi^*(v^k)+ s y^k) y^k ds\right\|_{\infty} \\ \le \int_{0}^{1} \left\|\tfrac{\partial \underline{\OF}^{\mathrm{cv}}}{\partial \pi} (v^k,\pi^*(v^k)+ s y^k)\right\|_{\infty} \|y^k\|_{\infty} ds \\ \le \sup_{(v,\pi)\in \mathcal{X}}\left\|\tfrac{\partial \underline{\OF}^{\mathrm{cv}}}{\partial \pi}(v,\pi)\right\|_{\infty} \|y^k\|.
\end{multline*}
Note that, for every $s\in S$, we have 
\begin{align}\label{eq:good}
    \tfrac{\partial^2}{\partial \pi^2} \underline{\OF}^{\mathrm{cv}}(v,\pi)(s) \succeq \tfrac{\partial^2}{\partial \pi^2} \underline{R}(s,\pi(s)) \succeq c I_{|S|}, 
\end{align}
where the first inequality in equation~\eqref{eq:good} holds by Proposition~\ref{thm:closedform} and the fact that, for every $i\in \{1,\ldots,j\}$, we have $v(i_v)-v(j_v)\ge 0$ and $a\mapsto \underline{P}^{\mathrm{cv}}(i_v,s,a)$ is concave and, for every $i\in \{j,\ldots,n\}$, we have $v(i_v)-v(j_v)\le 0$ and $a\mapsto \overline{P}^{\mathrm{cv}}(i_v,s,a)$ is convex. The second inequality in equation~\eqref{eq:good} holds by Assumption~\ref{ass:boundreward}. Therefore, equation~\eqref{eq:good} implies that the map $\pi\mapsto \underline{\OF}^{\mathrm{cv}}(v,\pi)(s)$ is $c$-strongly concave. Similarly, one can show that $\pi\mapsto \underline{\OF}^{\mathrm{cv}}(v,\pi)(s)$ is $L$-smooth, for every $s\in S$.  Therefore,
\begin{align*}
  \|y^k\| = \|\pi^{k}-\pi^*(v^{k})\|_{\infty} &\le \left(1-\tfrac{c}{L}\right)^{\ell} \|\pi^{k-1}-\pi^*(v^{k})\|_{\infty} \nonumber\\ &\le \left(1-\tfrac{c}{L}\right)^{\ell} \|A^{\mathrm{cv}}\|_{\infty},
\end{align*}
for every $k\in \mathbb{Z}_{\ge 1}$, where the first inequality holds by the fact that $\pi\mapsto \underline{\OF}^{\mathrm{cv}}(v,\pi)(s)$ is $c$-strongly concave and $L$-smooth  and using the known results about convergence of projected gradient descent~\cite[\S 5.1]{EKR-SB:16}. The second inequality holds since $\pi^{k-1},\pi^{*}(v^{k})\in (A^{\mathrm{cv}})^S$ and $\|(A^{\mathrm{cv}})^S\|_{\infty} = \|A^{\mathrm{cv}}\|_{\infty}$. As a result, for every $k\in \mathbb{Z}_{\ge 1}$
$\|v^{k+1} - V^{\mathrm{cv},*}_{\mathrm{p}}\|_{\infty} \le  \gamma \|v^{k} - V^{\mathrm{cv},*}_{\mathrm{p}}\|_{\infty} + \varepsilon$. This means that, for every $k\in \mathbb{Z}_{\ge 0}$
\begin{align*}
    \|v^{k} - V^{\mathrm{cv},*}_{\mathrm{p}}\|_{\infty} \le  \gamma^k \|v^{0} - V^{\mathrm{cv},*}_{\mathrm{p}}\|_{\infty} + \tfrac{(1-\gamma^k)\varepsilon}{1-\gamma}
\end{align*}
As a result, for every $k\in \mathbb{Z}_{\ge 1}$,  
\begin{align*}
V^{\mathrm{cv},*}_{\mathrm{p}} \le v^{k} + \left(\gamma^k \|v^{0} - V^{\mathrm{cv},*}_{\mathrm{p}}\|_{\infty} + \tfrac{(1-\gamma^k)\varepsilon}{1-\gamma}\right)\vect{1}_{|S|}.
\end{align*} 
This completes the proof of the theorem. 
\end{proof}
\smallskip

\begin{example}[A two-state continuous-action IMDP]\label{ex:simple}
We consider an IMDP $\mathcal{IM}=(S,A,[P],[R],\gamma)$ with two states $S=\{1,2\}$ and the continuous action-space $A=[0,1]\subset\real$ as shown in Figure~\ref{fig:IMDPsimple}. For every $s,s'\in \{1,2\}$, and every $a\in [0,1]$, we define the upper and lower bounds for probability transitions as follows: 
\begin{align*}
    \underline{P}(s',s,a)=0.5a,\qquad \overline{P}(s',s,a)=0.7 + 0.3a.
\end{align*}
For every $a\in [0,1]$, we define the lower and the upper bounds for the reward functions as follows:
\begin{align*}
    \underline{R}(1,a) = 1+4a\sqrt{a}-a^3, \quad \underline{R}(2,a) = 5 - a\sqrt{a}.
\end{align*}
 and we set the discount factor $\gamma=0.9$. For $\mathcal{IM}$, we consider the IMDP $\mathcal{IM}^{\mathrm{cv}} = (S,A^{\mathrm{cv}},[P^{\mathrm{cv}}],[R^{\mathrm{cv}}],\gamma)$ with $A^{\mathrm{cv}}=A=[0,1]$ and with the probability transition bounds $[P^{\mathrm{cv}}]=[P]$. Also the reward bounds are given by
  \begin{align*}
    &\underline{R}^{\mathrm{cv}}(1,a) = 1+4a-a^4, &&\underline{R}^{\mathrm{cv}}(2,a) = 5-a^2.
\end{align*}
The map $a\mapsto \underline{R}^{\mathrm{cv}}(s,a)$ is concave, for every $s\in \{1,2\}$. Moreover, we have $\underline{R}(s,a)\le \underline{R}^{\mathrm{cv}}(s,a)$, for every $s\in \{1,2\}$ and every $a\in [0,1]$. Thus, $\mathcal{IM}^{\mathrm{cv}}$ is an action-space pessimistic relaxation of $\mathcal{IM}$. We use the distributed optimization implementation of the pessimistic value iterations~\eqref{eq:iter-pes} and Theorem~\ref{thm:distributed}\ref{p2} with $\beta = 0.01$, $\ell=1$, and $k=1000$ iterations to obtain $\left[\begin{smallmatrix}35.2000\\
   39.2000\end{smallmatrix}\right]  \le V^{\mathrm{cv},*}_{\mathrm{p}} \le  \left[\begin{smallmatrix}52.7000 \\ 
   56.7000\end{smallmatrix}\right]$.
Using the value iterations~\eqref{eq:dy-pess} with the optimization problem~\eqref{eq:controller-pess}, one can compute $V^{\mathrm{cv},*}_{\mathrm{p}} = \left[\begin{smallmatrix}43.1820 \\ 
   43.8891\end{smallmatrix}\right]$.

\begin{figure}
    \centering
   \includegraphics[width =0.6\linewidth,clip]{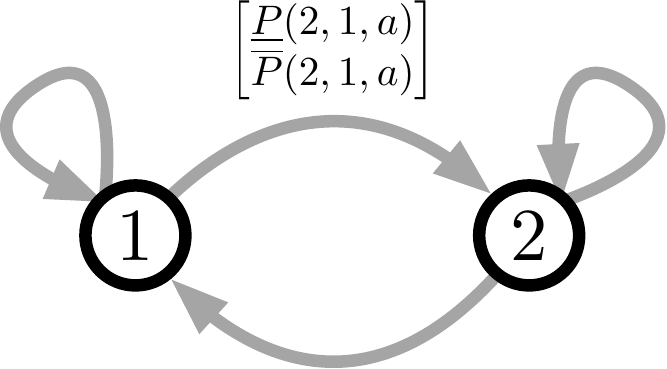}
   \vspace{0.2cm}
    \caption{The state-transition diagram for the interval Markov decision process $\mathcal{IM}$ given in Example~\ref{ex:simple}}
    \label{fig:IMDPsimple}
\vspace{-0.3cm}
\end{figure}

\end{example}

\section{Conclusions}

In this paper, we study IMDPs with continuous action-spaces.
We introduce the pessimistic and the optimistic value iterations for IMDPs and show that they are monotone and contracting dynamical systems. 
Using these observations, we introduce an action-space relaxation of the IMDP and use its value iterations to estimate the optimal policies of the original IMDP. 
Finally, we propose an iteration-distributed implementation of the value iterations and study its convergence to the optimal values.

\addtolength{\textheight}{-12cm}   %

\bibliographystyle{IEEEtran}
\bibliography{SJ.bib}

\begin{thebibliography}{10}
\providecommand{\url}[1]{#1}
\csname url@samestyle\endcsname
\providecommand{\newblock}{\relax}
\providecommand{\bibinfo}[2]{#2}
\providecommand{\BIBentrySTDinterwordspacing}{\spaceskip=0pt\relax}
\providecommand{\BIBentryALTinterwordstretchfactor}{4}
\providecommand{\BIBentryALTinterwordspacing}{\spaceskip=\fontdimen2\font plus
\BIBentryALTinterwordstretchfactor\fontdimen3\font minus
  \fontdimen4\font\relax}
\providecommand{\BIBforeignlanguage}[2]{{%
\expandafter\ifx\csname l@#1\endcsname\relax
\typeout{** WARNING: IEEEtran.bst: No hyphenation pattern has been}%
\typeout{** loaded for the language `#1'. Using the pattern for}%
\typeout{** the default language instead.}%
\else
\language=\csname l@#1\endcsname
\fi
#2}}
\providecommand{\BIBdecl}{\relax}
\BIBdecl

\bibitem{MLP:14}
M.~L. Puterman, \emph{Markov Decision Processes: Discrete Stochastic Dynamic
  Programming}, ser. Wiley Series in Probability and Statistics.\hskip 1em plus
  0.5em minus 0.4em\relax John Wiley \& Sons, 2014.

\bibitem{DB-JNT:96}
D.~Bertsekas and J.~N. Tsitsiklis, \emph{Neuro-dynamic programming}.\hskip 1em
  plus 0.5em minus 0.4em\relax Athena Scientific, 1996.

\bibitem{RSS-AGB:18}
R.~S. Sutton and A.~G. Barto, \emph{Reinforcement learning: An
  introduction}.\hskip 1em plus 0.5em minus 0.4em\relax MIT press, 2018.

\bibitem{GNI:05}
G.~N. Iyengar, ``Robust dynamic programming,'' \emph{Mathematics of Operations
  Research}, vol.~30, no.~2, pp. 257--280, 2005.

\bibitem{AN-LEG:05}
A.~Nilim and L.~El~Ghaoui, ``Robust control of {Markov} decision processes with
  uncertain transition matrices,'' \emph{Operations Research}, vol.~53, no.~5,
  pp. 780--798, 2005.

\bibitem{SHQL-AA-PLG-BA:21}
S.~Li, A.~Adje, P.-L. Garoche, and B.~Acıkmese, ``Bounding fixed points of
  set-based {Bellman} operator and {Nash} equilibria of stochastic games,''
  \emph{Automatica}, vol. 130, p. 109685, 2021.

\bibitem{DT-GA-SC:14}
T.~Dick, A.~Gyorgy, and C.~Szepesvari, ``Online learning in {Markov} decision
  processes with changing cost sequences,'' in \emph{Proceedings of the 31st
  International Conference on International Conference on Machine Learning -
  Volume 32}, ser. ICML'14, 2014, p. I–512–I–520.

\bibitem{RG-SL-TD:00}
R.~Givan, S.~Leach, and T.~Dean, ``Bounded-parameter markov decision
  processes,'' \emph{Artificial Intelligence}, vol. 122, no.~1, pp. 71--109,
  2000.

\bibitem{GD-ML-MM-LL:22}
\BIBentryALTinterwordspacing
G.~Delimpaltadakis, M.~Lahijanian, M.~Mazo~Jr, and L.~Laurenti, ``Interval
  markov decision processes with continuous action-spaces,'' \emph{arXiv
  preprint}, 2022. [Online]. Available: \url{https://arxiv.org/abs/2211.01231}
\BIBentrySTDinterwordspacing

\bibitem{EMW-UT-RMM:12}
E.~M. Wolff, U.~Topcu, and R.~M. Murray, ``Robust control of uncertain {Markov}
  decision processes with temporal logic specifications,'' in \emph{2012 IEEE
  51st IEEE Conference on Decision and Control (CDC)}, 2012, pp. 3372--3379.

\bibitem{JJ-YZ-SC:22}
J.~Jiang, Y.~Zhao, and S.~Coogan, ``Safe learning for uncertainty-aware
  planning via interval {MDP} abstraction,'' \emph{IEEE Control Systems
  Letters}, vol.~6, pp. 2641--2646, 2022.

\bibitem{ML-SBA-CB:12}
M.~Lahijanian, S.~B. Andersson, and C.~Belta, ``Temporal logic motion planning
  and control with probabilistic satisfaction guarantees,'' \emph{IEEE
  Transactions on Robotics}, vol.~28, no.~2, pp. 396--409, 2012.

\bibitem{SA-ML-LL:22}
S.~Adams, M.~Lahijanian, and L.~Laurenti, ``Formal control synthesis for
  stochastic neural network dynamic models,'' \emph{IEEE Control Systems
  Letters}, vol.~6, pp. 2858--2863, 2022.

\bibitem{MD-JH-SC:21}
M.~Dutreix, J.~Huh, and S.~Coogan, ``Abstraction-based synthesis for stochastic
  systems with omega-regular objectives,'' \emph{Nonlinear Analysis: Hybrid
  Systems}, vol.~45, p. 101204, 2022.

\bibitem{SH-BM:18}
S.~Haddad and B.~Monmege, ``Interval iteration algorithm for {MDP}s and
  {IMDP}s,'' \emph{Theoretical Computer Science}, vol. 735, pp. 111--131, 2018,
  {Reachability Problems 2014: Special Issue}.

\bibitem{ANM-LH-DL:08-new}
A.~N. Michel, L.~Hou, and D.~Liu, \emph{Stability of dynamical systems:
  Continuous, discontinuous, and discrete systems}.\hskip 1em plus 0.5em minus
  0.4em\relax Birkh\"{a}user Boston, Inc., Boston, MA, 2008.

\bibitem{SPB-LV:04}
S.~P. Boyd and L.~Vandenberghe, \emph{Convex optimization}.\hskip 1em plus
  0.5em minus 0.4em\relax Cambridge university press, 2004.

\bibitem{EKR-SB:16}
E.~K. Ryu and S.~Boyd, ``Primer on monotone operator methods,'' \emph{Applied
  Computational Mathematics}, vol.~15, no.~1, pp. 3--43, 2016.

\end{thebibliography}

\end{document}